\newcommand{\imp}{\rightarrow}
\newcommand{\et}{\wedge}
\newcommand{\Et}{\bigwedge}
\renewcommand{\phi}{\varphi}
\newcommand{\inter}{\cap}
\newcommand{\cce}{\coloncolonequals}
\newcommand{\ce}{\colonequals}
\newcommand{\SFive}{{\sf S5}}
\newcommand{\KBFour}{{\sf KB4}}
\newcommand{\M}{\widehat{K}}
\newcommand{\localK}{\mathcal L^-}
\newcommand{\glocalK}{\mathcal L^+}
\newcommand{\fctdef}{\bowtie}
\newcommand{\ltdef}{\bowtie}
\newcommand{\nfctdef}{\not\bowtie}
\newcommand{\nltdef}{\not\bowtie}
\newcommand{\fctsat}{\vDash}
\newcommand{\nfctsat}{\nvDash}
\newcommand{\model}{\mathcal M}
\newcommand{\C}{\mathcal C}
\newcommand{\B}{\mathcal B}
\newcommand{\VV}{\mathcal V}
\newcommand{\FF}{\mathcal{F}}
\newcommand{\weg}[1]{}
\newcommand{\lbr}{[\![}
\newcommand{\rbr}{]\!]}
\newcommand{\I}[1]{\lbr #1 \rbr}
\newcommand{\sstar}{\mathsf{star}}
\newcommand{\loc}{\mathsf{par}}
\newcommand{\sta}{\mathsf{st}}
\newcommand{\lt}{\mathfrak{\rev{T}}}
\newcommand{\ltf}[1]{\lt_{{#1}}}
\newcommand{\treeLab}{\mathfrak{X}} 
\newcommand{\Tree}{\mathcal{T}}
\newcommand{\depth}{\mathop{\mathsf{Depth}}}
\newcommand{\bisim}{\leftrightarroweq}
\newcommand{\nbisim}{\not\bisim}
\newcommand{\for}[2]{\varphi^{\bowtie}_{#1}}
\newcommand{\forrk}[2]{#2^{\bowtie}_{#1}}
\newcommand{\Z}{{\mathcal Z}}
\tikzstyle{a dot}=[fill=green, shape=circle, inner sep=1]
\tikzstyle{b dot}=[fill=yellow, shape=circle,  inner sep=1]
\tikzstyle{c dot}=[fill=pink, shape=circle, inner sep=1]
\tikzstyle{d dot}=[fill=blue!20, shape=circle, inner sep=1]
\tikzstyle{e dot}=[fill=orange!20, shape=circle,  inner sep=1]
\tikzstyle{abig dot}=[minimum size=5ex,fill=green, shape=circle, inner sep=1]
\tikzstyle{bbig dot}=[minimum size=5ex,fill=yellow, shape=circle,  inner sep=1]
\tikzstyle{cbig dot}=[minimum size=5ex,fill=pink, shape=circle, inner sep=1]
\tikzstyle{dbig dot}=[minimum size=5ex,fill=blue!20, shape=circle, inner sep=1]
\tikzstyle{ebig dot}=[minimum size=5ex,fill=orange!20, shape=circle,  inner sep=1]
\tikzstyle{Kstate} = [fill = gray!20, shape=rectangle, rounded corners=5pt]
\newcommand{\rev}[1]{{#1}}
\begin{document}

\begin{frontmatter}

  \title{Bisimulation for Impure Simplicial Complexes}

  \author{Marta B{\'\i{}}lkov{\'a}}\footnotemark[1]%
  \qquad\footnotetext[1]{Work of Marta B\'\i{}lkov\'a on this paper was supported by the grant no. 22-23022L CELIA of the Czech Science Foundation.}% 
  \and%
  \qquad% 
  \author{Hans van Ditmarsch}
  \address{Czech Acad Sci, Inst Comp Sci\qquad\qquad Un.~of Toulouse, CNRS, IRIT, France}
  \author{\qquad\!\!\!\!\! Roman Kuznets}\footnotemark[2]\footnotetext[2]{This research was funded in whole or in part by the Austrian Science Fund (FWF) project ByzDEL [\href{https://doi.org/10.55776/P33600}{10.55776/P33600}].} 
 \quad\!\!\and\quad\,\,
  \author{Rojo Randrianomentsoa}\footnotemark[2]
  \address{TU Wien, Austria}

	\begin{abstract} 
	As~an alternative to Kripke models, simplicial complexes are a versatile semantic primitive on which to interpret epistemic logic. 
	Given a set of vertices, a simplicial complex is a downward closed set of subsets, called simplexes, of the vertex set. 
	A~maximal simplex is called a facet. Impure simplicial complexes represent that some agents~(processes) are dead. It~is known that impure simplicial complexes categorically correspond to so-called partial epistemic~(Kripke) models. 
	In~this contribution, we define a notion of bisimulation to compare impure simplicial complexes and show that it has the Hennessy--Milner property. 
	These results are for a logical language including atoms that express whether agents are alive or dead. 
	Without these atoms \rev{no reasonable standard notion of  bisimulation exists}, as we amply justify by counterexamples, \rev{because 
	such a restricted language is  insufficiently expressive}.
	\end{abstract}

 	\begin{keyword}
  	Epistemic logic, simplicial complexes, crashing agents, bisimulation. 
	\end{keyword}
\end{frontmatter}

\section{Introduction}
\label{intro}

\rev{Impure simplicial complexes provide semantics for multiagent epistemic logic for distributed systems with crash failures. 
In~this paper, we consider a three-valued semantics with the third value `undefined' used, e.g.,~for propositional atoms and knowledge statements pertaining to  crashed agents. 
We~show that the Hennessy--Milner property fails for the standard notion of bisimulation~and standard epistemic language. 
To~ameliorate the situation, we extend the language with global atoms representing whether a given agent is alive or crashed and prove the Hennessy--Milner property for this extended language. 
The~standard proof of this property, which relies on the symmetry between the two boolean truth values, does not work for our three-valued logic. 
To~adapt~it, we employ  a tailor-made notion of a life tree, which encodes definability and enables us to define a localized  translation from undefined to true formulas.}
\smallskip

\noindent
\textbf{Survey of the literature}
Simplicial complexes are well-known in combinatorial topology. 
There~have been promising and exciting recent connections between combinatorial topology and epistemic logic~\cite{goubaultetal_postdali:2021,vDitKuz22arXiv,hvdetal.simpl:2022,goubaultetal:2021,GoubaultKLR23,CachinLS23,HaRoRo,GoubaultKL24,cdrv:2023,CastanedavDKMS24}.

Combinatorial topology has been used in distributed computing to model concurrency and asynchrony since~\cite{BiranMZ90,FischerLP85,luoietal:1987}. 
Higher-dimensional topological properties~\cite{HS99,herlihyetal:2013} allow for an epistemic representation. 
The~basic structure in combinatorial topology is the \emph{simplicial complex}, a downward closed collection of subsets called \emph{simplexes} of a set of \emph{vertices}. 
Geometric manipulations such as subdivision have natural combinatorial counterparts. 

Epistemic logic investigates knowledge and belief, and change thereof, in multiagent systems. 
A~foundational study is~\cite{hintikka:1962}. 
Knowledge change was extensively modeled in temporal epistemic logics~\cite{halpernmoses:1990,Pnueli77,dixonetal.handbook:2015} and in dynamic epistemic logics~\cite{baltagetal:1998,hvdetal.del:2007}.

An~epistemic logic interpreted on \emph{pure} simplicial complexes was proposed in~\cite{goubaultetal_postdali:2021}. 
It~shows a categorical correspondence between Kripke models and simplicial complexes, and based on that, the resulting logic is \rev{multiagent}~$\SFive$ augmented with the locality axiom $K_a p_a \vee K_a \neg p_a$ stating that all agents know their local state. 
Action models~\cite{baltagetal:1998} are used to model distributed computing tasks. \looseness=-1

In~\rev{\emph{pure complexes}} and their temporal developments, all processes remain active (are~alive). 
They~describe \emph{asynchronous} message passing. In impure complexes some processes may have crashed (are~dead). 
They~can be used to describe \emph{synchronous} message passing (with~timeouts)~\cite{herlihyetal:2013}. 
\rev{\emph{Impure complexes}} correspond to Kripke models with partial equivalence relations (symmetric~and transitive relations). 
Epistemic logics interpreted on impure simplicial complexes were proposed in~\cite{vDitKuz22arXiv,goubaultetal:2021}. 
In~\cite{goubaultetal:2021}, a two-valued semantics is proposed and the authors axiomatize the logic as \rev{multiagent}~$\KBFour$ where, if process~$a$ is dead, then $K_a \bot$~is true. In~\cite{vDitKuz22arXiv}, a three-valued modal logical semantics  is proposed for the same language where the third value stands for `undefined', e.g.,~dead processes cannot know or be ignorant of any proposition, nor can  live processes know or be ignorant of factual propositions involving processes they know to be dead. 
This~logic was axiomatized in~\cite{RandrianomentsoaDK23} in a version of \rev{multiagent}~$\SFive$ called~$\SFive^{\bowtie}$. 
Its~notion of knowledge relates to ``belief as defeasible knowledge'' of~\cite{MosesS93}. 
For example, $K_a \varphi \imp \varphi$ is valid in the sense that, if $K_a \varphi$~is true, then $\varphi$~is not false (but~may be~undefined). 
The~three- and two-valued approaches are compared in~\cite{HaRoRo}.

Subsequent developments of simplicial epistemics include generalizations from individual knowledge to distributed knowledge~\cite{GoubaultKLR23} and from simplicial complexes to (semi-)simplicial sets~\cite{GoubaultKLR23,CachinLS23,GoubaultKL24}. 
Dynamics of complexes were investigated in~\cite{cdrv:2023}.

The~propositional base of our three-valued semantics is known as Paraconsistent Weak Kleene logic~(PWK)~\cite{Kleene1952-KLEITM,Hallden1949-HALTLO-7,SzmucForthcoming-SZMAEI,CiuniCarrara2019,BonzioGPP17}. 
Bisimulations and limits of expressivity (Hennessy--Milner property) for many-valued modal logics based on algebra-valued Kripke frames, or more generally coalgebras, were considered in~\cite{Marti2014-MARAHP-6,BilkovaDostal2016}, restricting to algebras in question being residuated lattices or even finite MTL~chains (of~which the PWK~three-valued matrix is~neither). 
\smallskip

\noindent
\textbf{Motivating example}
Consider three agents~$a$, $b$,~and~$c$ with \emph{local atoms}~$p_a$, $p_b$,~and~$p_c$ respectively describing their local state. 
The~value for~$a$ is~$1$ when $p_a$~is true or~$0$ when false,  and similarly for~$b$~and~$c$. 
In~combinatorial topology, such information can be represented in a simplicial complex. 
In~simplicial model~$\C'$ of Fig.~\ref{fig.motivating}, \rev{every agent knows (i.e.,~there is mutual knowledge) that}  the values of~$a$~and~$c$ are~$1$ and that of~$b$ is~$0$. 
In~simplicial model~$\C$, on the other hand, agent~$a$ is uncertain whether agent~$c$ is still alive. 
This~uncertainty is represented by  edge~$X$ and triangle~$Y$ intersecting in the vertex \rev{(labeled~with)}~$1_a$.

\begin{figure}[h]
\centering
\scalebox{.8}{
\begin{tikzpicture}
\fill[fill=gray!25!white] (2,0) -- (4,0) -- (3,1.51) -- cycle;
\node (c) at (-1,0) {$\C:$};
\node[b dot] (b1) at (0,0) {$0_b$};
\node[b dot] (b0) at (4,0) {$0_b$};
\node[c dot] (c1) at (3,1.51) {$1_c$};
\node[a dot] (a0) at (2,0) {$1_a$};
\node (f1) at (3,.65) {$Y$};
\draw[-] (b1) -- node[above] {$X$} (a0);
\draw[-] (a0) -- (b0);
\draw[-] (b0) -- (c1);
\draw[-] (a0) -- (c1);
\end{tikzpicture}
\qquad\qquad
\begin{tikzpicture}
\fill[fill=gray!25!white] (2,0) -- (4,0) -- (3,1.51) -- cycle;
\node (c) at (1,0) {$\C':$};
\node[b dot] (b0) at (4,0) {$0_b$};
\node[c dot] (c1) at (3,1.51) {$1_c$};
\node[a dot] (a0) at (2,0) {$1_a$};
\node (f1) at (3,.65) {$Y'$};

\draw[-] (a0) -- (b0);
\draw[-] (b0) -- (c1);
\draw[-] (a0) -- (c1);

\end{tikzpicture}
}

\bigskip

\scalebox{.8}{
\begin{tikzpicture}
\node (c) at (-1,0) {$\model:$};
\node[Kstate] (010) at (.5,0) {$1_a0_b$};
\node[Kstate] (001) at (3.5,0) {$1_a0_b1_c$};
\draw[ - ] (010) -- node[above] {$a$} (001);
\end{tikzpicture}
\qquad\qquad\qquad
\begin{tikzpicture}
\node (c) at (-1,0) {$\model':$};
\node[Kstate] (010) at (.5,0) {$1_a0_b1_c$};
\end{tikzpicture}
}
\caption{Simplicial models and corresponding partial epistemic models}
\label{fig.motivating}
\end{figure}
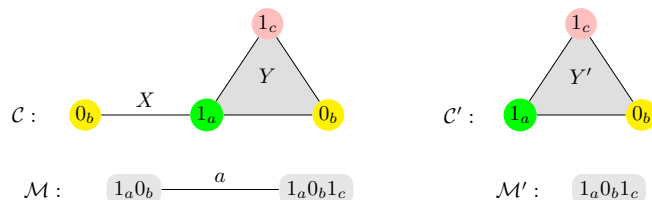

Model~$\C$ encodes that $a$~is uncertain whether agent~$c$ is alive but knows, e.g.,~that the value of~$b$ is~$0$ as this is the case in both edge~$X$ and triangle~$Y$\rev{, which} intersect in the $a$-vertex. 
Model~$\C'$ encodes that $a$~knows that agent~$c$ is alive, and also that $a$~knows that the value of~$b$ is~$0$. 
In~our logical semantics, one cannot evaluate formulas referring to the knowledge or local \rev{atoms} of agents that are dead. 
These are then undefined. 
The~semantics is, therefore, three-valued: a formula can be true, false, or undefined. For example, in edge~$X$ of~$\C$, formulas~$p_c$ (the~atom) and $K_c p_a$~are undefined. 
However, also in~$X$, a~formula like~$\M_a p_c$ is defined (and~true): $a$~considers it possible (namely, in the indistinguishable triangle~$Y$) that the value of~$c$ is~$1$ (that $p_c$~is true). 

A~so-called \emph{global atom} like~$a$, $b$,~or~$c$ represents that agents~$a$, $b$,~and~$c$ respectively are alive. 
It~is elementary to distinguish  structures $(\C,Y)$ and $(\C',Y')$ this way: in~$(\C,Y)$ it is true that $a$~considers it possible that $c$~is dead, formalized as~$\M_a \neg c$, whereas in~$(\C',Y')$ this is false\rev{:
a}gent~$a$ there knows that $c$~is alive,~$K_a c$. 
A~corresponding notion of bisimilarity between structures should, therefore, also distinguish \rev{them. 
In}~this contribution, we will define such a notion and show its elementary properties.

In~the language without such global atoms, we cannot say that $a$~considers it possible that $c$~is dead. 
\rev{For~instance, a naive attempt to use~$K_c \top$ to represent global atom~$c$ fails for two separate reasons. 
Firstly, $\neg K_c \top$~is undefined rather than true when $c$~is actually dead; secondly, the boolean constant~$\top$, which is true in all structures, is itself not expressible without global atoms. 
Indeed, for  every formula without global atoms, there is a singleton structure where this formula is undefined~\cite{RandrianomentsoaDK23}.} 

\rev{More generally, there is no  way to distinguish $(\C,Y)$ from~$(\C',Y')$ without expanding the language.} 
We~show \rev{(Prop.~\ref{prop.llll})} that in the logical language with atoms~$p_a$, $p_b$,~and~$p_c$ and modalities~$K_a$, $K_b$,~and~$K_c$ only, $(\C,Y)$ and $(\C',Y')$ make the same formulas true, false, and undefined: they are \rev{\emph{modally equivalent}}. 
Therefore, \rev{$(\C,Y)\bisim^\ast(\C',Y')$ should hold for any reasonable} notion~\rev{$\bisim^\ast$ of \emph{bisimilarity}}. 
But~now we have a problem: assuming such a notion, as $X$~and~$Y$~intersect in~$a$, when making a \rev{bisimilar} \textbf{forth} step from~$Y$ to~$X$, we should be able to emulate that in~$\C'$ by moving from~$Y'$  somewhere\rev{. 
But~no modally equivalent match exists, making it impossible to define $\bisim^\ast$.}

All~these issues similarly exist in the corresponding local epistemic~(Kripke) models~$\model$~and~$\model'$ pictured below simplicial models~$\C$~and~$\C'$ in Fig.~\ref{fig.motivating}. 
In~place of equivalence relations of  epistemic models, \rev{\emph{partial epistemic models}} have partial equivalence relations (symmetric~and transitive relations). 
Model~$\C'$ corresponds to the singleton model~$\model'$, with reflexive access for~$a$, $b$,~and~$c$ and making~$p_a$~and~$p_b$~false and $p_c$~true there, whereas model~$\C$ corresponds to a two-world Kripke model~$\model$ where only $a$~cannot distinguish the two worlds and wherein the relation for~$c$ is restricted to the world where it is alive. 
In~fact, $\model$~is an equivalence class of two Kripke models, \rev{one where $p_c$~is assigned value true in the left world and another one where it is assigned false.  
Since $c$~is dead in that world, however, the assigned value of~$p_c$ there is moot:  $p_c$~is considered undefined, allowing us to conflate the two models.} 
\smallskip

\noindent
\textbf{Results in this contribution}
We~distinguish two epistemic languages to describe impure simplicial complexes: one without and one with global atoms. 
These languages are interpreted on simplicial models, where we also extend a known correspondence between those and partial epistemic models to the language with global atoms. 

For~the language without global atoms, we prove that even simple structures like~$(\C,Y)$ and $(\C',Y')$ in Fig.~\ref{fig.motivating} may have the same information content (be~modally equivalent), but \rev{lack} a corresponding notion of bisimulation where \textbf{back} and \textbf{forth} steps can be locally checked.\footnote{\rev{Note that our analysis here and in Example~\ref{ex.faraway} pertains only to the standard notion of bisimulation. The existence of weaker notions,  e.g., of $\rho$-bisimulations~\cite{GrootHansenKurz2020}, which do not depend on expressivity, remains open.}}
\rev{One}~can only employ ``brute~force'' here, taking properties of the whole model into account. 
Hence, the language without global atoms \rev{is} insufficiently expressive. 

For~the language with global atoms, we define bisimulation on simplicial models. 
By~way of an intermediate structure called a \emph{life tree}, \rev{which provides an efficient way to check definability,}  we then establish \rev{the} Hennessy--Milner characterization. 
In~our three-valued semantics, with values true, false, and undefined, 
it is non-standard to show that modal equivalence implies bisimilarity on finitary structures. 
In~the usual method for \rev{the} two-valued semantics, one can, w.l.o.g.,~assume that a formula distinguishing a point in one structure from a (finite)~set of 
points in the other structure is true in the former and not true~(false) in the latter. 
But~in \rev{the}  three-valued semantics, instead of being true, \rev{it can be undefined}. 
In~that case, life trees and the local geometry of the former structure are used to construct another distinguishing formula that is true.\looseness=-1

Finally, we translate our results to a three-valued Kripke semantics interpreted on partial epistemic models. 
We define a \emph{life bisimulation} between partial epistemic models, which can \rev{be} easily seen to be weaker than standard bisimulation. 
\rev{The} Hennessy--Milner property \rev{follows from that} for complexes \rev{via a categorical equivalence}. 
\smallskip

\noindent
\textbf{Overview}
Section~\ref{sec.complex} defines the two logical languages and simplicial models and provides the simplicial semantics to interpret these languages on them. 
Section~\ref{sec.local} explains the difficulties of defining bisimulation for the language with only local atoms. 
Section~\ref{sec.bisimulation} then defines bisimulation for the language with global atoms and establishes the Hennessy--Milner characterization. 
Section~\ref{sec.kripke} presents the results for bisimulation in the corresponding setting of Kripke models.\looseness=-1
 
\section{Epistemic Simplicial Semantics}
\label{sec.complex}

We~consider a finite set~$A$ of \emph{agents} (or~\emph{processes})~$a,b,\dots$ and a  set
$P = A \sqcup \bigsqcup_{a \in A} P_a$
of \emph{propositional \rev{atoms}} where sets~$P_a$ are countable and mutually disjoint sets of \emph{local \rev{atoms} for agent~$a$}, denoted~$p_a, q_a, p'_a, q'_a, \dots$ 
We overload the meaning of~$A$ as it also stands for the set of \emph{global \rev{atoms}}. 
\rev{Atoms}~$a \in A$ represent `agent~$a$ is alive.' 
We define two languages:  the \emph{local language} with only local \rev{atoms} and the \emph{glocal \rev{language}}  with both local and global \rev{atoms}.

\begin{definition}
	Let  $a \in A$ and $p_a \in P_a$. \emph{Language~$\glocalK$}  is defined by 
	\[
	\varphi 
	\cce
	a \mid p_a \mid \neg\varphi \mid (\varphi\land\varphi) \mid \M_a \varphi.
	\]  
\emph{Language~$\localK$} is the fragment of~$\glocalK$ without global atoms~$a$. Boolean  connectives are defined in a standard way; $K^{\mathstrut}_a \varphi \ce \neg \M_a \neg \varphi$. \rev{In $\glocalK$, we can additionally define boolean constants  $\top \ce a \lor \neg a$ for some fixed $a \in A$ and $\bot \ce \neg \top$.}
\end{definition}

\begin{definition}
	A \emph{simplicial model}~$\C$ is a triple~$(C,\chi,\ell)$ such that:
	\begin{compactitem}
		\item 
		A (\emph{simplicial}) \emph{complex} $C\ne \varnothing$ is a collection of  \emph{simplexes} such that
		\begin{compactenum}
		 \item
		 every $X \in C$ is a non-empty finite subset of a given set~$\VV$  of \emph{vertices};
		 \item for each simplex $X \in C$, if  $\varnothing \ne Y \subseteq X$, then $Y \in C$;
		 \item $\{v\} \in C$  for each vertex $v \in \VV$.
		\end{compactenum}
		\item 
		A \emph{chromatic function} $\chi \colon \VV \to A$ is a map from vertices to agents such that
for each $X \in C$ and $v,u \in X$, if  $\chi(v) = \chi(u)$ , then $v = u$.
		\item 
		A \emph{valuation function} $\ell \colon \VV \to 2^{P{\setminus}A}$ is a map from vertices to sets of local \rev{atoms} such that $\ell(v) \subseteq P_a$ for each $v \in \VV$ with $\chi(v) = a$.
	\end{compactitem}
			For each simplex $X \in C$, we define $\chi(X) \ce \{\chi(v) \mid v \in X\}$ and $\ell(X)\ce\bigsqcup_{v \in X} \ell(v)$.  
			For arbitrary simplexes $X, Y \in C$, if $Y \subseteq X$, we say that $Y$~is a \emph{face} of~$X$. 
	Since each simplex is a face of itself, we use `simplex' and `face' interchangeably. Faces~$Y$~and~$Z$ are \emph{$a$-adjacent}  if{f} $a \in \chi(Y \cap Z)$.	  
	A face~$X$ is a \emph{facet} if{f} for any simplex $Y \in C$, if $Y \supseteq X$, then $Y = X$. 
The set of all facets is denoted~$\FF(C)$. 
	The \emph{dimension of a simplex}~$X$ is~$|X|-1$. 
	The \emph{dimension of a simplicial model~$\C$} is the largest dimension of its facets. 
	A simplicial model~$\C$ is \emph{pure} if{f} all its facets have dimension~$|A|-1$. 
	Otherwise, $\C$~is \emph{impure}. 
	A~\emph{pointed simplicial model} is a pair~$(\C,X)$ where $X \in \FF(C)$. We often omit `pointed'.
\end{definition} 

\begin{definition}
\label{lanKa.fct.defsat}
	Let $\C = (C,\chi,\ell)$ be a simplicial model. \emph{Definability relation~$\fctdef$} and \emph{satisfaction relation~$\fctsat$} for $X \in \FF(C)$ are defined recursively on $\varphi\in \glocalK$:\\
\strut\qquad	$\begin{array}{lcl}
		\C, X \fctdef a & &  \text{always}; \\
		\C, X \fctdef p_a & \text{if{f}} & a \in \chi(X); \\
		\C, X \fctdef \neg \varphi & \text{if{f}} & \C, X \fctdef \varphi; \\
		\C, X \fctdef \varphi\land\psi & \text{if{f}} & \C, X \fctdef \varphi \ \text{and} \ \C, X \fctdef \psi; \\
		\C,X \fctdef \M_a\varphi & \text{if{f}} & \C,Y \fctdef \varphi \ \text{for some} \ Y \in \FF(C) \ \text{with} \ a \in \chi(X \inter Y). \\[1.5ex]
	
		\C, X \fctsat a & \text{if{f}} &   a \in \chi(X); \\
		\C, X \fctsat p_a & \text{if{f}} &  \ p_a \in \ell(X); \\
		\C, X \fctsat \neg \varphi & \text{if{f}} & \C, X \fctdef \varphi \ \text{and} \ \C, X  \nfctsat \varphi; \\
		\C, X \fctsat \varphi\land\psi & \text{if{f}} & \C, X \fctsat \varphi \ \text{and} \ \C, X \fctsat \psi; \\
		\C,X \fctsat \M_a\varphi & \text{if{f}} & C,Y \fctsat \varphi \text{ for some } \ Y \in \FF(C)  \text{ with } a \in \chi(X\cap Y). 
	\end{array}$
\end{definition}

\begin{example} 
	\label{example.xxx}
	For simplicial model~$\C$ from Fig.~\ref{fig.motivating}:
	\begin{compactitem}
		\item \emph{Atoms and knowledge of dead agents.} Illustrating the novel aspects of the semantics,   $\C,X \nfctdef p_c$  since $c \notin \chi(X)=\{a,b\}$. Consequently, $\C,X \nfctdef \neg p_c$, $\C,X \nfctsat p_c$, and $\C,X \nfctsat \neg p_c$. For the same reason, $\C,X \nfctdef \M_c  p_a$. Thus, $\C,X \nfctdef \neg \M^{\mathstrut}_c p_a$, $\C,X \nfctsat \M_c  p_a$, and $\C,X \nfctsat \neg \M_c  p_a$. 
		
		\item \emph{Knowledge of a live agent concerning dead agents.} Although $\C,X\nfctdef p_c$, still $\C,X \fctsat \M_a p_c$ because $a \in \chi(X \inter Y)=\{a\}$ and $\C,Y \fctsat p_c$. 		More surprisingly, also $\C,X \fctsat K_a p_c$ because, given the two facets~$X$~and~$Y$ that agent~$a$ considers possible, as far as $a$~knows, $p_c$~is true. This knowledge is defeasible because $a$~may learn that the actual facet is~$X$ and not~$Y$, which she also considers possible.
	\end{compactitem}
\end{example}

\section{Bisimulation for Simplicial Models with Local Atoms?} \label{sec.local}

Before we define bisimulation for simplicial models in the language with global atoms, we explain the difficulties in finding such a notion for the language without. 
The~introductory section mentioned that pointed simplicial models~$(\C,Y)$~and $(\C',Y')$ from Fig.~\ref{fig.motivating} are modally equivalent in language~$\localK$. 
We~now establish it formally.

\begin{definition}
	\label{modalEquivalence}
	Pointed simplicial models $(\C,Y)$~and $(\C',Y')$ are   \emph{modally equivalent in \rev{a} language~$\mathcal{L}$}, written $(\C,Y)\equiv_{\mathcal{L}}(\C',Y')$, 	
	if{f} for \rev{each  $\varphi \in \mathcal{L}$}:\looseness=-1
		\begin{gather}
		\label{eq:modeq_def}
		\C,Y \fctdef \varphi \quad\Longleftrightarrow\quad \C',Y' \fctdef \varphi,
		\\
		\label{eq:modeq_true}
		\C,Y \fctsat \varphi \quad\Longleftrightarrow\quad \C',Y' \fctsat \varphi ,
		\\
		\label{eq:modeq_false}
		\C,Y \fctsat \neg \varphi\quad\Longleftrightarrow\quad \C',Y' \fctsat \neg \varphi.
	\end{gather}
\end{definition}
It is easy to show   that, in fact, \eqref{eq:modeq_true}~alone is sufficient:
\begin{lemma}[Criterion of modal equivalence]
	\label{simplifyModalEquivalence} 
If\/~\eqref{eq:modeq_true}~holds for all~$\varphi \in \mathcal{L}$ for pointed simplicial models\/ $(\C,Y)$~and\/ $(\C',Y')$, then\/ 
$(\C,Y)\equiv_{\mathcal{L}}(\C',Y')$.
\end{lemma}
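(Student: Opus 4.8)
The plan is to show that condition~\eqref{eq:modeq_true} (preservation of truth in both directions) already forces conditions~\eqref{eq:modeq_def} and~\eqref{eq:modeq_false}, so that modal equivalence in the full sense follows. The key observation is that, in the glocal language, both definability and falsity of a formula can be \emph{encoded as truth of another formula}. For falsity this is immediate: $\C,Y\fctsat\neg\varphi$ is, by definition of the satisfaction relation, a truth statement (about the formula~$\neg\varphi\in\mathcal L$), so if~\eqref{eq:modeq_true} holds for all formulas in~$\mathcal L$, applying it to~$\neg\varphi$ gives~\eqref{eq:modeq_false} directly. (Here I would note that if $\mathcal L=\localK$ has no global atoms, $\neg\varphi$ is still in~$\localK$, so the argument is uniform; and if $\mathcal L=\glocalK$, likewise.)

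For definability, the plan is to exhibit, for each~$\varphi\in\mathcal L$, a formula~$\df(\varphi)\in\mathcal L$ that is \emph{always defined} and is \emph{true at exactly those facets where $\varphi$ is defined}. The natural candidate is $\df(\varphi)\ce\varphi\lor\neg\varphi$ (equivalently $\neg(\varphi\land\neg\varphi)$ after unfolding the boolean abbreviations), since in Paraconsistent Weak Kleene logic $\varphi\lor\neg\varphi$ tracks definability of~$\varphi$: it is true when $\varphi$ is defined (whether $\varphi$ is true or false) and undefined when $\varphi$ is undefined. I would verify this by a short induction, or more simply by unwinding the clauses of Definition~\ref{lanKa.fct.defsat}: $\C,X\fctsat\varphi\lor\neg\varphi$ iff $\C,X\fctdef\varphi\lor\neg\varphi$ and ($\C,X\fctsat\varphi$ or $\C,X\fctsat\neg\varphi$), and the definability clause for disjunction together with the clause for~$\neg$ shows $\C,X\fctdef\varphi\lor\neg\varphi$ iff $\C,X\fctdef\varphi$; combining, $\C,X\fctsat\varphi\lor\neg\varphi$ iff $\C,X\fctdef\varphi$. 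Hence, assuming~\eqref{eq:modeq_true} for all formulas, apply it to~$\varphi\lor\neg\varphi$ to obtain $\C,Y\fctdef\varphi\iff\C',Y'\fctdef\varphi$, which is~\eqref{eq:modeq_def}.

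The only point requiring a little care — and the closest thing to an obstacle — is making sure the encoding formulas stay \emph{inside} the language~$\mathcal L$ under consideration. For~$\glocalK$ this is trivial. For~$\localK$ one must check that $\varphi\lor\neg\varphi$ and~$\neg\varphi$ use no global atoms, which is clear since they are built from~$\varphi\in\localK$ using only~$\neg$ and~$\land$; the boolean constants~$\top,\bot$ are not needed here. With these two encodings in hand, the lemma is immediate: \eqref{eq:modeq_true} for all formulas yields~\eqref{eq:modeq_def} via~$\varphi\lor\neg\varphi$ and~\eqref{eq:modeq_false} via~$\neg\varphi$, which together are exactly the definition of $(\C,Y)\equiv_{\mathcal L}(\C',Y')$.
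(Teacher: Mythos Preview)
Your proof is correct and essentially the same as the paper's. For~\eqref{eq:modeq_false} both apply~\eqref{eq:modeq_true} to~$\neg\varphi$; for~\eqref{eq:modeq_def} the paper argues directly that $\C,Y\fctdef\varphi$ means $\C,Y\fctsat\varphi$ or $\C,Y\fctsat\neg\varphi$ and applies~\eqref{eq:modeq_true} to each disjunct separately, which is precisely what your encoding $\df(\varphi)=\varphi\lor\neg\varphi$ unwinds to once the derived semantics of~$\lor$ is expanded---so your detour through the disjunction is just a repackaging of the same observation.
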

\begin{proof}
	We~need  to establish~\eqref{eq:modeq_def}~and~\eqref{eq:modeq_false} for all~$\varphi \in \mathcal{L}$. 
	\eqref{eq:modeq_false}~for~$\varphi$ follows from~\eqref{eq:modeq_true} for~$\neg \varphi$. 
	To~show~\eqref{eq:modeq_def} for~$\varphi$, assume first that $\C,Y \fctdef \varphi$. 
	Then either $\C,Y \fctsat \varphi$~or $\C,Y \fctsat \neg \varphi$. 
	By~\eqref{eq:modeq_true}  for~$\varphi$~and~$\neg \varphi$,  either $\C',Y' \fctsat \varphi$~or $\C',Y' \fctsat \neg \varphi$. 
	In~either case, $\C',Y' \fctdef \varphi$. 
	We~have proved the left-to-right direction of~\eqref{eq:modeq_def}. 
	The~right-to-left direction is symmetric. 
\end{proof}

Lemma~\ref{simplifyModalEquivalence} seems to considerably simplify the definition of three-valued modal equivalence, as it now appears to be the same as in the two-valued semantics. 
In~practice, it is not much help in proofs, as $\C,Y\nfctsat\varphi$ may imply falsity as well as being undefined. 
It~is more practical to check both definability~\eqref{eq:modeq_def}  and truth~\eqref{eq:modeq_true}. 
(See,~e.g.,~the proof of Theorem~\ref{BisimImpliesEquiv}, where both~\eqref{eq:modeq_def}~and~\eqref{eq:modeq_true}  are needed in the induction step for~\eqref{eq:modeq_true} for~$\neg$.)

\begin{proposition} 
\label{prop.llll} 
$(\C,Y) \equiv_{\localK} (\C',Y')$ 
for\/  $(\C,Y)$~and\/ $(\C',Y')$ from Fig.~\ref{fig.motivating}.
\end{proposition}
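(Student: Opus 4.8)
The plan is to spell out the two models explicitly and then run a structural induction on $\varphi \in \localK$, proving simultaneously that $(\C,Y)$ and $(\C',Y')$ agree on definability~\eqref{eq:modeq_def} and on truth~\eqref{eq:modeq_true}; by Lemma~\ref{simplifyModalEquivalence} this already yields $(\C,Y)\equiv_{\localK}(\C',Y')$, and carrying~\eqref{eq:modeq_def} along the induction is exactly what makes the negation step go through (the situation flagged in the remark after that lemma). Write $v_a,v_b,v_c$ for the three vertices of~$Y$ (labelled $1_a,0_b,1_c$) and $w_b$ for the extra $0_b$-labelled vertex of~$\C$, so that the two facets of~$\C$ are $X=\{v_a,w_b\}$ and $Y=\{v_a,v_b,v_c\}$, while $Y'=\{v_a',v_b',v_c'\}$ is the single facet of~$\C'$; the figure stipulates that equally labelled vertices carry equal valuations, in particular $\ell(v_a)=\ell(v_a')$, $\ell(v_b)=\ell(v_b')=\ell(w_b)$, and $\ell(v_c)=\ell(v_c')$. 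The adjacency data needed is: from~$Y$ in~$\C$ the $a$-, $b$-, $c$-adjacent facets are $\{X,Y\}$, $\{Y\}$, $\{Y\}$ respectively; from~$X$ in~$\C$ they are $\{X,Y\}$, $\{X\}$, $\varnothing$; and from~$Y'$ in~$\C'$ all three are $\{Y'\}$.

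The only real difficulty is the $\M_a$-step from~$Y$, which can also land in~$X$, a facet with no counterpart in~$\C'$. So I would first prove, by induction on $\psi\in\localK$, the auxiliary claim that \emph{at~$X$ every formula of~$\localK$ is either undefined or has the same truth value as at~$Y$}: formally, $\C,X\fctdef\psi$ implies both $\C,Y\fctdef\psi$ and $\bigl(\C,X\fctsat\psi \Leftrightarrow \C,Y\fctsat\psi\bigr)$. The atomic case holds because $v_a$ is shared by~$X$ and~$Y$, because $w_b$ and $v_b$ carry the same valuation, and because the local atoms of~$c$ are undefined at~$X$; the $\neg$ and $\land$ cases are routine from the induction hypothesis. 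For $\M_b\psi$, the unique $b$-adjacent facet of~$X$ is~$X$ and the unique $b$-adjacent facet of~$Y$ is~$Y$, so $\M_b\psi$ reduces to~$\psi$ on both sides and the claim propagates via the induction hypothesis; $\M_c\psi$ is always undefined at~$X$, so the implication is vacuous; and for $\M_a\psi$, both~$X$ and~$Y$ have $a$-adjacent facet set $\{X,Y\}$, so $\M_a\psi$ has literally the same definability and value at~$X$ as at~$Y$.

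With the auxiliary claim available, the main induction on $\varphi\in\localK$ is short. For local atoms, $\chi(Y)=\chi(Y')$ gives agreement on~\eqref{eq:modeq_def}, and equality of valuations on corresponding vertices gives~\eqref{eq:modeq_true}. The $\neg$ and $\land$ cases use the induction hypothesis for both~\eqref{eq:modeq_def} and~\eqref{eq:modeq_true}. For $\M_b\varphi$ and $\M_c\varphi$, the $b$- (resp.\ $c$-) adjacent facet of~$Y$ is~$Y$ itself and that of~$Y'$ is~$Y'$, so the definability and truth of $\M_b\varphi$ and $\M_c\varphi$ reduce to those of~$\varphi$ and the induction hypothesis finishes the case. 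For $\M_a\varphi$, the $a$-adjacent facets of~$Y$ are~$X$ and~$Y$; by the auxiliary claim $\C,X\fctdef\varphi$ forces $\C,Y\fctdef\varphi$ and $\C,X\fctsat\varphi$ forces $\C,Y\fctsat\varphi$, so~$X$ is modally invisible from~$Y$ and $\M_a\varphi$ at~$Y$ is equivalent, in both definability and truth, to~$\varphi$ at~$Y$; as~$Y'$ is the sole $a$-adjacent facet of~$Y'$, $\M_a\varphi$ at~$Y'$ is likewise equivalent to~$\varphi$ at~$Y'$; the induction hypothesis for~$\varphi$ then closes the case. An appeal to Lemma~\ref{simplifyModalEquivalence} completes the proof.

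The crux, as expected, is the auxiliary claim: it is precisely the statement that the geometrically ``extra'' facet~$X$ is invisible to~$\localK$ from~$Y$ because everything definable at~$X$ is already present, with the same value, at~$Y$. Keeping definability and truth separate throughout---rather than collapsing to a two-valued ``holds / does not hold''---is what the three-valued semantics forces, and the ``undefined or equal'' bookkeeping is exactly what drives the $\M_a$ step and has no analogue in the usual two-valued bisimulation argument.
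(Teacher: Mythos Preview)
Your proof is correct and follows essentially the same approach as the paper: an auxiliary induction showing that anything defined at~$X$ has the same truth value at~$Y$, followed by the main induction on~$\varphi$ with the $\M_a$-case handled via the auxiliary claim. The only cosmetic difference is that the paper observes~\eqref{eq:modeq_def} is trivial (since $\chi(Y)=\chi'(Y')=A$ makes every $\localK$-formula defined at both~$Y$ and~$Y'$) rather than carrying it through the induction as you do.
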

\begin{proof}
	Let~us first prove  by induction on the construction of~\mbox{$\varphi \in \localK$} an auxiliary statement: 
	\begin{equation}
		\label{eq:aux}
		\C,X \fctdef \varphi 
		\quad\Longrightarrow\quad 
			(
			\C,X \fctsat \varphi 
			\Longleftrightarrow 
			\C,Y \fctsat \varphi
			).
	\end{equation}
	For~local \rev{atoms}, this is obvious from the construction of~$\C$.
	The~cases for~$\neg$~and~$\land$ are straightforward. 
	For~$\varphi = \M_a \psi$, the truth value is determined by the vertex~$1_a$, which is shared between~$X$~and~$Y$. 
	No~formula~$\M_c \psi$ is defined in~$X$. 
	It~remains to consider  
	$\varphi=\M_b \psi$ 
	such that 
	$\C,X \fctdef \M_b \psi$:
	\[
	\C,X \fctsat \M_b \psi 
	\quad\Longleftrightarrow\quad
	\C,X \fctsat \psi
	\quad\xLeftrightarrow{\text{IH}}\quad
	\C,Y \fctsat \psi
	\quad\Longleftrightarrow\quad
	\C,Y \fctsat \M_b\psi
	\]
	where the~IH can be applied to~$\psi$ because 
	$\C,X \fctdef \M_b \psi$ 
	implies 
	$\C,X \fctdef \psi$.
	
	By~Lemma~\ref{simplifyModalEquivalence}, it is sufficient to prove~\eqref{eq:modeq_true}. 
	(Note~that  
	\eqref{eq:modeq_def}~is trivial here since all formulas are defined in both~$Y$~and~$Y'$.) 
	We~use  induction on~\mbox{$\varphi \in \localK$}.
	For~local \rev{atoms}, \eqref{eq:modeq_true}~is obvious from the construction of~$\C$~and~$\C'$.
	The~cases for~$\neg$~and~$\land$ are straightforward.
	For~$\varphi = \M_i \psi$ with~$i \in \{b,c\}$, property~\eqref{eq:modeq_true} follows from the~IH and the fact that, for each of~$Y$~and~$Y'$, the only \rev{facet} $b$-/$c$-adjacent to it is itself.
	Finally, for~$\phi = \M_a \psi$,
	\begin{multline*}
		\C,Y \fctsat \M_a \psi
		\quad\Longleftrightarrow\quad
		\C,X \fctsat \psi \text{ or } \C,Y \fctsat \psi
		\quad\xLeftrightarrow{\eqref{eq:aux}}\quad
		\C,Y \fctsat \psi
		\quad\xLeftrightarrow{\text{IH}}\quad\\
		\quad\xLeftrightarrow{\text{IH}}\quad
		\C',Y' \fctsat \psi
		\quad\Longleftrightarrow\quad
		\C',Y' \fctsat \M_a \psi 
	\end{multline*}
where \eqref{eq:aux}~is only used when~$\C,X \fctsat \psi$ and, hence,~$\C,X \fctdef \psi$.
\end{proof}

A~natural notion of bisimulation for simplicial models, as adapted from  Kripke models, is the following:
\begin{definition}
	\label{def:bis0}
	A~\emph{bisimulation} between simplicial models $\C=(C,\chi,\ell)$~and $\C'=(C',\chi',\ell')$ is a non-empty binary relation~$\B\subseteq\FF(C)\times\FF(C')$ such that, whenever~$X \B X'$, the following conditions are fulfilled:
	\begin{compactdesc}
		\item[\textbf{Atoms}:] 
			$\chi(X)= \chi'(X')$~and $\ell(X) =\ell'(X')$.
		
		\item[\textbf{Forth}:] 
			For~each agent~$a \in A$, if~$Y \in \FF(C)$~and $a \in \chi(X \inter Y)$, then there exists~$Y' \in \FF(C')$ with~$a \in \chi'(X' \inter Y')$~and $Y \B Y'$.
	
		\item[\textbf{Back}:] 
			For~each agent~$a \in A$, if~$Y' \in \FF(C')$~and $a \in \chi'(X' \inter Y')$, then there exists~$Y \in \FF(C)$ with~$a \in \chi(X \inter Y)$~and $Y \B Y'$.
	\end{compactdesc}
Simplicial models are called \emph{bisimilar} if{f} there exists a bisimulation~$\B$ between~them.
Pointed simplicial models~$(\C,X)$~and $(\C',X')$ are \emph{bisimilar}, written $(\C,X)\bisim(\C',X')$, if{f} there exists a bisimulation~$\B$ between~$\C$~and~$\C'$ with~$X\B X'$.	
\end{definition}

A~proper notion of bisimulation should enjoy the Hennessy--Milner property: two pointed simplicial models are modally equivalent if{f} they are bisimilar. 
Since~$(C,Y)\equiv_{\localK}(\C',Y')$  by Prop.~\ref{prop.llll}, they are supposed to be bisimilar. 
The~\textbf{forth} requirement \rev{of Def.~\ref{def:bis0}} demands to find a facet of~$\C'$ bisimilar to  facet~$X$ of~$\C$. 
\rev{But~no facet of~$\C'$ can be bisimilar to~$X$  because none is modally equivalent to~$X$.%
\footnote{\rev{In fact, even if the semantics is defined on all simplexes rather than only on facets (see~\cite{HaRoRo}), even then there are no simplexes in~$\C'$ that are modally equivalent to~$X$ because~$\C, X \fctsat \neg p_b$ but~$\C, X \nfctsat \M_b p_c$.}}} 
\rev{Thus,~the only possibility to define bisimulation coinciding with modal equivalence would be to  rule out such troublesome adjacent \rev{facets} from the scope of  \textbf{forth}/\textbf{back}, similar to how  a disconnected component does not prevent establishing a bisimulation. 
The difficulty here is that $X$~is not disconnected. 
Morally,  $X$~need not be considered here in \textbf{forth} step from~$Y$ because strictly fewer formulas are defined in~$X$ than in~$Y$ (while~for all~$\varphi$, \mbox{$\C, X \fctdef \varphi$} implies $\C, Y \fctdef \varphi$~and $\C, X \fctsat \varphi$ implies $\C, Y \fctsat \varphi$). 
Unfortunately,  as we demonstrate in Example~\ref{ex.faraway}, no local condition exists that would enable us to make a determination whether to consider an adjacent \rev{facet} in \textbf{forth}/\textbf{back}  or not. 
More precisely, such a determination cannot be made  based exclusively on the values of~$\chi$~and~$\ell$ in all \rev{facets} reachable in (at most) $k$~consecutive steps from one \rev{facet} to an adjacent one, for any fixed~$k$.}

\begin{figure}[h]
\centering
\scalebox{.8}{
		\begin{tikzpicture}
			\node (Complex) at (-1.5,-.5) {$\C:$};
			\node[a dot] (a) at (0,0.3) {$0_a$};
			\node[c dot] (c) at (0,-1) {$1_{b}$};
			\node[b dot] (bl) at (-1,-2) {$0_{c}$};
			\node[b dot] (br) at (1,-2) {$0_{c}$};
			\node[d dot] (bbl) at (-1,-3) {$0_{d}$};
			\node[d dot] (bbr) at (1,-3) {$0_{d}$};
			\node[b dot] (bbbl) at (-1,-4) {$0_{c}$};
			\node[b dot] (bbbr) at (1,-4) {$0_{c}$};
			\node[d dot] (dl) at (-1,-5) {$0_d$};
			\node[d dot] (dr) at (1,-5) {$0_d$};
			\node[e dot] (el) at (-1,-6) {$1_e$};
			\node[e dot] (er) at (1,-6) {$1_e$};
			
			\draw[-] (a) -- node[left] {$X$} (c);
			\draw[-] (c) -- node[above,pos=.7] {$Y$} (bl);
			\draw[-] (c) -- (br);
			\draw[-] (bl) -- (bbl);
			\draw[-] (br) -- (bbr);
			\draw[dotted,thick] (bbl) -- (bbbl);
			\draw[dotted,thick] (bbr) -- (bbbr);
			\draw[-] (bbbl) -- (dl);
			\draw[-] (bbbr) -- (dr);
			\draw[-] (dl) -- (el);
			\draw[-] (dr) -- (er);
		\end{tikzpicture} \qquad\qquad
		\begin{tikzpicture}
			\node (Complex) at (-1.5,-.5) {$\C':$};
			\node[a dot] (a) at (0,0.3) {$0_a$};
			\node[c dot] (c) at (0,-1) {$1_{b}$};
			\node[b dot] (bl) at (-1,-2) {$0_{c}$};
			\node[b dot] (br) at (1,-2) {$0_{c}$};
			\node[d dot] (bbl) at (-1,-3) {$0_{d}$};
			\node[d dot] (bbr) at (1,-3) {$0_{d}$};
			\node[b dot] (bbbl) at (-1,-4) {$0_{c}$};
			\node[b dot] (bbbr) at (1,-4) {$0_{c}$};
			\node[d dot] (dl) at (-1,-5) {$0_d$};
			\node[d dot] (dr) at (1,-5) {$0_d$};
			\node[e dot] (el) at (-1,-6) {$0_e$};
			\node[e dot] (er) at (1,-6) {$1_e$};
			
			\draw[-] (a) -- node[left] {$X'$} (c);
			\draw[-] (c) -- node[above,pos=.7] {$Y'$} (bl);
			\draw[-] (c) -- (br);
			\draw[-] (bl) -- (bbl);
			\draw[-] (br) -- (bbr);
			\draw[dotted,thick] (bbl) -- (bbbl);
			\draw[dotted,thick] (bbr) -- (bbbr);
			\draw[-] (bbbl) -- (dl);
			\draw[-] (bbbr) -- (dr);
			\draw[-] (dl) -- (el);
			\draw[-] (dr) -- (er);
		\end{tikzpicture} \qquad
		\begin{tikzpicture}
			\node (Complex) at (-1.5,-.5) {$\C'':$};
			\node[a dot] (a) at (0,0.3) {$0_a$};
			\node[c dot] (c) at (0,-1) {$1_{b}$};
			\node[b dot] (bl) at (-1,-2) {$0_{c}$};
			\node[b dot] (br) at (1,-2) {$0_{c}$};
			\node[d dot] (bbl) at (-1,-3) {$0_{d}$};
			\node[d dot] (bbr) at (1,-3) {$0_{d}$};
			\node[b dot] (bbbl) at (-1,-4) {$0_{c}$};
			\node[b dot] (bbbr) at (1,-4) {$0_{c}$};
			\node[d dot] (dl) at (-1,-5) {$0_d$};
			\node[d dot] (dr) at (1,-5) {$0_d$};
			\node[e dot] (er) at (1,-6) {$1_e$};
			
			\draw[-] (a) -- node[left] {$X''$} (c);
			\draw[-] (c) -- node[above,pos=.7] {$Y''$} (bl);
			\draw[-] (c) -- (br);
			\draw[-] (bl) -- (bbl);
			\draw[-] (br) -- (bbr);
			\draw[dotted,thick] (bbl) -- (bbbl);
			\draw[dotted,thick] (bbr) -- (bbbr);
			\draw[-] (bbbl) -- (dl);
			\draw[-] (bbbr) -- (dr);
			\draw[-] (dr) -- (er);
		\end{tikzpicture}
}
	\caption{\rev{Differences between~$(\C,X)$, $(\C',X')$,~and $(\C'',X'')$ are not local.}}
\label{fig.localglobal}
\end{figure}
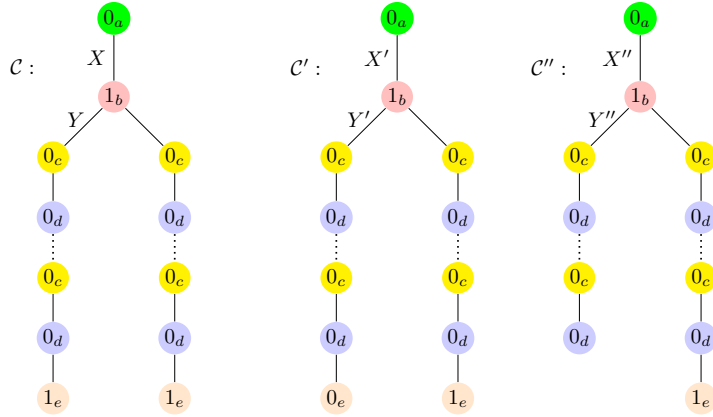

\begin{example}\label{ex.faraway}
\rev{Consider simplicial models~$\C$, $\C'$,~and~$\C''$ in Fig.~\ref{fig.localglobal}, where the number of yellow $0_c$~vertices (purple $0_d$~vertices) on each vertical branch is~$m$. 
The three models differ only  in the \rev{$e$}-labeled vertices at the bottom, which can be reached from facets~$X$, $X'$,~and~$X''$ respectively in no fewer than $2m+1$~consecutive steps. 
Despite all facets within at most $2m$~steps from~$X$, $X'$,~and~$X''$ coinciding, we have $(\C,X)\not\equiv_{\localK}(\C',X')$ because~$\C,X \fctsat K_b (K_cK_d)^m  p_e$ whereas~$\C',X' \nfctsat K_b (K_cK_d)^m p_e$. 

Assume towards a contradiction that there exists some notion~$\bisim^\ast$ of bisimulation that allows to ignore some of the adjacent facets in \textbf{forth}/\textbf{back} steps based on some local condition and that this notion satisfies the Hennessy--Milner property. 
In~particular, $(\C,X)\nbisim^\ast(\C',X')$. 
Since the only difference between~$\C$~and~$\C'$ can be reached through~$Y$~and~$Y'$ respectively, $\bisim^\ast$~cannot avoid the \textbf{back} step from~$X'$ to~$Y'$. 
At~the same time, $Y''$~in~$\C''$ is the same as~$Y'$ in~$\C'$, and so are all facets reachable by at most $2m-1$~consecutive steps from~$Y''$ respectively~$Y'$. 
Given that $m$~can be made arbitrarily large, no local condition can differentiate~$Y''$ from~$Y'$, meaning that $\bisim^\ast$,~in testing  $(\C,X)$~and $(\C'',X'')$,  must also consider~$Y''$ as a \textbf{back} step from~$X''$. 
It~is easy to see that $Y''$~is not modally equivalent to any facet of~$\C$. 
Indeed, $\C'', Y'' \fctsat p_b \land \neg p_c$, yet $\C'', Y'' \nfctsat (K_{c} K_{d})^m p_e$. 
Thus, $(\C,X)\nbisim^\ast(\C'',X'')$ must hold due to the failure of \textbf{back} to~$Y''$. 
At the same time, one can show, as in Prop.~\ref{prop.llll}, that $(\C,X)\equiv_{\localK}(\C'',X'')$. 
This~contradiction with the  Hennessy--Milner property shows that no reasonable standard bisimulation~$\bisim^\ast$ exists  for language~$\localK$.}
\end{example}

\rev{What is the reason for the failure of Hennessy--Milner property for language~$\localK$? 
Kupke and Pattinson,  in the context of coalgebraic semantics of modal logics~\cite{KupkeP11}, call the Hennessy--Milner property the~``gold~standard'' regarding the question of~``expressivity of modal logic,'' or, more specifically, ``whether the logic~[...]~provide[s] enough power to describe particular properties of a system.'' 
Similarly, and specifically in the context of many-valued bisimulations, Marti and Metcalfe~\cite{Marti2014-MARAHP-6} equate the formal question of ``whether analogues of the Hennessy–Milner property~[...]~hold for image-finite models of many-valued modal logics'' with the informal attempt ``to~determine whether the language is expressive enough to distinguish image-finite models of many-valued modal logics.'' 
Combining these two separate areas, in studying many-valued modal logics coalgebraically, B\'{\i}lkov\'a and Dost\'al~\cite{BilkovaDostal2016} ask whether ``finitary modal languages [are] expressive for bisimilarity'' by checking the Hennessy–Milner property. 
Thus, the proper question is not why Hennessy–Milner fails, but what properties language~$\localK$ is not able to express. 
Put~this way, the answer is not hard to guess. 
What~is missing is the ability to talk about agents being dead/alive in the object language. 
In~the next section, we show that  adding global atoms for agents being alive indeed remedies the situation.} 

\section{Bisimulation for Simplicial Models with Glocal Atoms}
 \label{sec.bisimulation}
In~this section, we switch to language~$\glocalK$ with global atoms. 
Hence,~$\equiv$~from now on is understood as~$\equiv_{\glocalK}$.
Note that the \textbf{atoms} clause of Def.~\ref{def:bis0} requires that not only local \rev{atoms} have the same truth values, i.e.,~$\ell(X) =\ell'(X')$, but that global \rev{atoms} do too, i.e.,~\mbox{$\chi(X)= \chi'(X')$}. 

\begin{example} \label{example.bisim}
The~three simplicial models below are pairwise  
bisimilar. 
Moreover, there exist bisimulations that relate each facet of one model to some facet of another in either direction (we~call such bisimulations \emph{total}). 
In~particular,
$\{(X, X'), (X,Z'), (Y,Y')\}$~is a bisimulation between the left and middle models. 
Similarly, $\{(X, X''), (Y,Y''), (Y,Z'')\}$~is a bisimulation between the left and right models.
The~atomic harmony for global atoms is a consequence of only relating facets with 
 the same set of agents.
\begin{center}
\scalebox{.8}{
\begin{tikzpicture}
\fill[fill=gray!25!white] (2,0) -- (4,0) -- (3,1.51) -- cycle;
\node[b dot] (b1) at (0,0) {$1_b$};
\node[b dot] (b0) at (4,0) {$0_b$};
\node[c dot] (c1) at (3,1.51) {$1_c$};
\node[a dot] (a0) at (2,0) {$0_a$};
\node (f1) at (3,.65) {$Y$};
\draw[-] (b1) -- node[above] {$X$} (a0);
\draw[-] (a0) -- (b0);
\draw[-] (b0) -- (c1);
\draw[-] (a0) -- (c1);
\node (c1b) at (3,-1.51) {\color{white}$1_b$};
\end{tikzpicture}
\quad
\begin{tikzpicture}
\fill[fill=gray!25!white] (2,0) -- (4,0) -- (3,1.51) -- cycle;
\node[b dot] (b1) at (0,0) {$1_b$};
\node[b dot] (b0) at (4,0) {$0_b$};
\node[c dot] (c1) at (3,1.51) {$1_c$};
\node[a dot] (a0) at (2,0) {$0_a$};
\node (f1) at (3,.65) {$Y'$};
\draw[-] (b1) -- node[above] {$X'$} (a0);
\draw[-] (a0) -- (b0);
\draw[-] (b0) -- (c1);
\draw[-] (a0) -- (c1);
\node[b dot] (c1b) at (3,-1.51) {$1_b$};
\draw[-] (a0) --(c1b)  node at (2.2,-.9) {$Z'$};
\end{tikzpicture}
\quad
\begin{tikzpicture}[round/.style={circle,fill=white,inner sep=1}]
\fill[fill=gray!25!white] (2,0) -- (4,0) -- (3,1.51) -- cycle;
\fill[fill=gray!25!white] (2,0) -- (4,0) -- (3,-1.51) -- cycle;
\node[b dot] (b1) at (0,0) {$1_b$};
\node[b dot] (b0) at (4,0) {$0_b$};
\node[c dot] (c1) at (3,1.51) {$1_c$};
\node[a dot] (a0) at (2,0) {$0_a$};
\node (f1) at (3,.65) {$Y''$};
\draw[-] (b1) -- node[above] {$X''$} (a0);
\draw[-] (a0) -- (b0);
\draw[-] (b0) -- (c1);
\draw[-] (a0) -- (c1);
\node[c dot] (c1b) at (3,-1.51) {$1_c$};
\node (f1) at (3,-.65) {$Z''$};
\draw[-] (b0) -- (c1b);
\draw[-] (a0) -- (c1b);
\end{tikzpicture}
}
\end{center}
\end{example}

The~following theorem is the easier direction of the Hennessy--Milner property: bisimilarity implies modal equivalence.
\begin{theorem}[Bisimilarity implies modal equivalence]
	\label{BisimImpliesEquiv}
	For~arbitrary pointed simplicial models\/ $(\C,X)$~and\/ $(\C',X')$, 
\[
	(\C,X)\bisim(\C',X')
	\quad\Longrightarrow\quad
	(\C,X)\equiv_{\glocalK}(\C',X').
	\]
\end{theorem}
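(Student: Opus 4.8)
The plan is a routine structural induction on $\varphi \in \glocalK$, carrying the definability equivalence \eqref{eq:modeq_def} and the truth equivalence \eqref{eq:modeq_true} together as a single induction hypothesis. Concretely, fix a bisimulation $\B$ between $\C$ and $\C'$ with $X \B X'$, and show that for every pair $Z \B Z'$ and every $\varphi \in \glocalK$ both $\C,Z \fctdef \varphi \iff \C',Z' \fctdef \varphi$ and $\C,Z \fctsat \varphi \iff \C',Z' \fctsat \varphi$ hold. Applying this to the pair $X \B X'$ yields \eqref{eq:modeq_def} and \eqref{eq:modeq_true} for $(\C,X)$ and $(\C',X')$, and then \eqref{eq:modeq_false} follows at once by instantiating \eqref{eq:modeq_true} at $\neg\varphi$ (equivalently, by Lemma~\ref{simplifyModalEquivalence}), giving $(\C,X) \equiv_{\glocalK} (\C',X')$.

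For the base cases, a global atom $a$ is always defined in both structures, and $\C,Z \fctsat a \iff a \in \chi(Z) \iff a \in \chi'(Z') \iff \C',Z' \fctsat a$ by the \textbf{Atoms} clause of Def.~\ref{def:bis0}; a local atom $p_a$ is defined in $Z$ iff $a \in \chi(Z)$ and is true in $Z$ iff $p_a \in \ell(Z)$, so again both equivalences follow from $\chi(Z) = \chi'(Z')$ and $\ell(Z) = \ell'(Z')$. The conjunction case is componentwise, using the hypothesis for each conjunct for both $\fctdef$ and $\fctsat$. The negation case is where the joint induction is essential: $\C,Z \fctdef \neg\psi \iff \C,Z \fctdef \psi$ reduces to the definability part of the hypothesis for $\psi$, whereas $\C,Z \fctsat \neg\psi$ iff $\C,Z \fctdef \psi$ and $\C,Z \nfctsat \psi$, which needs \emph{both} the definability and the truth parts of the hypothesis for $\psi$ — precisely the point flagged after Lemma~\ref{simplifyModalEquivalence}, and the reason \eqref{eq:modeq_true} cannot be proved in isolation.

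The modal case $\varphi = \M_a \psi$ is handled symmetrically for $\fctdef$ and for $\fctsat$ via \textbf{Forth} and \textbf{Back}. For the left-to-right direction of definability: if $\C,Z \fctdef \M_a \psi$, pick $Y \in \FF(C)$ with $a \in \chi(Z \inter Y)$ and $\C,Y \fctdef \psi$; by \textbf{Forth} applied to $Z \B Z'$ there is $Y' \in \FF(C')$ with $a \in \chi'(Z' \inter Y')$ and $Y \B Y'$, so the induction hypothesis gives $\C',Y' \fctdef \psi$ and hence $\C',Z' \fctdef \M_a \psi$; the right-to-left direction uses \textbf{Back} in the same way, and the two directions for $\fctsat$ are the identical argument with $\fctsat$ in place of $\fctdef$. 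There is no genuine obstacle in this (easy) direction of Hennessy--Milner: the only subtlety is the bookkeeping just described, namely that the three-valued negation clause forces definability and truth to be established in one simultaneous induction rather than sequentially.
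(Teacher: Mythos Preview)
Your proposal is correct and follows essentially the same approach as the paper's proof: a simultaneous induction on $\varphi$ establishing both \eqref{eq:modeq_def} and \eqref{eq:modeq_true} for all bisimilar pairs, with the \textbf{Atoms} clause handling the base cases, the joint hypothesis needed for negation, and \textbf{Forth}/\textbf{Back} handling the modal case. The only cosmetic difference is notation ($Z,Z'$ versus the paper's $Y,Y'$) and that the paper invokes Lemma~\ref{simplifyModalEquivalence} up front to reduce the goal, whereas you invoke it (or the equivalent observation) at the end.
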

\begin{proof}
	Let~$(\C,X)\bisim(\C',X')$ for~$\C=(C,\chi,\ell)$~and $\C'=(C',\chi', \ell')$.
	Let~$X \B X'$ for a bisimulation~$\B$ between~$\C$~and~$\C'$. 
	By~Lemma~\ref{simplifyModalEquivalence}, to prove~\mbox{$(\C,X) \equiv (\C',X')$},  it is sufficient to prove~\eqref{eq:modeq_true} for~$Y=X$~and $Y'=X'$ for all~$\varphi \in \glocalK$. 
	We prove both~\eqref{eq:modeq_true}~and~\eqref{eq:modeq_def} for all facets~$Y \in \FF(C)$~and $Y' \in \FF(C')$ such that~$Y \B Y'$ by mutual induction  on the construction of~$\varphi\in \glocalK$. 
	\begin{compactitem}
		\item 
			For~global atom~$\varphi = a$:
			\eqref{eq:modeq_def}~is trivial 
			because $a$~is defined  on all facets.  
			\eqref{eq:modeq_true}~follows from~$\chi(Y)=\chi'(Y')$ part of  \textbf{atoms}.
			
		\item 
			For~local atom~$\varphi = p_a$:  \eqref{eq:modeq_def}~and~\eqref{eq:modeq_true}~follow from~$\chi(Y)=\chi'(Y')$~and $\ell(Y)=\ell'(Y')$ parts of  \textbf{atoms} respectively. 	
			
		\item 
			For~negation~$\varphi = \neg \psi$: 
			\begin{compactitem}
				\item[\eqref{eq:modeq_def}] 
				$
				\C,Y \fctdef \neg \psi \Longleftrightarrow
				\C,Y \fctdef \psi \xLeftrightarrow{\text{IH\eqref{eq:modeq_def}}}
				\C',Y' \fctdef  \psi \Longleftrightarrow
				\C',Y' \fctdef \neg \psi 
				$.
				
				\item[\eqref{eq:modeq_true}] 
				\label{see_negation}
				$
				\C,Y \fctsat \neg \psi 
				\Longleftrightarrow
				(\C, Y \fctdef \psi \text{ and }\C,Y \nfctsat \psi) \xLeftrightarrow{\text{IH\eqref{eq:modeq_def},\eqref{eq:modeq_true}}}$\\
				$\phantom{\C,Y \fctsat \neg \psi 
					\Longleftrightarrow{}} (\C', Y' \fctdef \psi \text{ and }\C',Y' \nfctsat \psi) 
				\Longleftrightarrow
				\C',Y' \fctsat \neg \psi 
				$.
			\end{compactitem}
			
		\item 
			For~conjunction~$\varphi = \psi \land \chi$, the argument is similar.
			
		\item 
			For~modality~$\varphi = \M_a \psi$, the argument for~\eqref{eq:modeq_def}~and~\eqref{eq:modeq_true} is similar, thus, we only prove~\eqref{eq:modeq_def}.
			From~left to right, 
			assume~$\C,Y \fctdef \M_a \psi$. 
			Then~there is~$Z \in \FF(C)$ such that~$a \in \chi(Y \cap Z)$~and $\C, Z \fctdef \psi$.  
			By~\textbf{forth} for~$Y \B Y'$, there is~$Z' \in \FF(C')$ such that~$a \in \chi'(Y' \cap Z')$~and $Z \B Z'$. 
			Then~$\C',Z' \fctdef \psi$ by the~IH\eqref{eq:modeq_def} for~$\psi$~and $Z \B Z'$. 
			Thus,~$\C',Y' \fctdef \M_a \psi$. 
			The~other direction is symmetric using \textbf{back} instead of \textbf{forth}.
		\end{compactitem}
			\vspace{-.45cm}
\end{proof}

Before proving the other, harder direction of the Hennessy--Milner property, we first explain where the standard proof fails. 
In~order to show the converse, i.e.,~that modal equivalence implies bisimilarity, one typically defines a binary relation of modal equivalence between states of two given models and shows that it is a bisimulation, \rev{under a suitable finitary assumption. 
As~mentioned earlier, the standard assumption for Kripke models is that of image-finiteness~\cite{Marti2014-MARAHP-6}. 
The~translation of this assumption into simplicial models yields the property of  \emph{star-finiteness}}: a simplicial model is \emph{star-finite} if{f}, for each agent~$a$, the number of facets $a$-adjacent to any given facet is finite.
Assuming star-finite $(\C,X)$~and $(\C',X')$ such that~$(\C,X) \equiv (\C',X')$, consider \textbf{forth}.  
Let~$Y$~be $a$-adjacent to~$X$ in~$\C$, and assume, towards a contradiction, that there is no modally equivalent~$Y'$ in~$\C'$ that is $a$-adjacent to~$X'$. 
Then,~for each facet $a$-adjacent to~$X'$ in~$\C'$, there must be  a formula distinguishing it from~$Y$. 
By~the finitary assumption, there are only finitely many such facets~$Y'_1,\dots,Y'_k$ and formulas~$\psi_1,\dots, \psi_k$ such that $\psi_i$~distinguishes~$Y$ from~$Y'_i$. 
Since a formula distinguishes facets if{f} its negation does, in \rev{the} two-valued semantics, w.l.o.g.,~we may assume that each~$\psi_i$ is true in~$Y$ and false in~$Y'_i$, so that~$\M_a(\psi_1\land\dots\land \psi_k)$ is true in~$X$ and false in~$X'$, contradicting the assumption of their modal equivalence. 
By~contrast,  in \rev{the} three-valued semantics 
some of~$\psi_i$'s may be undefined in~$Y$, which cannot be sidestepped by switching to the negation.

Thus, to implement the same method, instead of negation, one needs a more complex way of transforming a distinguishing formula~$\psi_i$ that is undefined in~$Y$ but defined in~$Y'_i$ into another distinguishing formula~$f(\psi_i)$ that is true in~$Y$ and  false or undefined in~$Y'_i$. 
Unfortunately, such transformation cannot be done independently of~$(\C,X)$, as follows from the following proposition:

\begin{proposition}
\label{rest:nodirecttrans}
For~formula~$\varphi = p_a \et K_b p_c$, no formula~$\varphi^{\nfctdef} \in \glocalK$ satisfies 
\begin{gather}
	\C, X \nfctdef \varphi 
 	\qquad\Longleftrightarrow \qquad
 	\C,X \fctsat \varphi^{\nfctdef}
 \label{eq:simple_trans_ex}
\end{gather}
for all pointed simplicial models\/~$(\C,X)$.
\end{proposition}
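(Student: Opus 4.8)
The plan is to argue by contradiction. Suppose some $\varphi^{\nfctdef} \in \glocalK$ satisfies \eqref{eq:simple_trans_ex} for all pointed simplicial models. I~would proceed in two stages: first pin down the syntactic shape of $\varphi^{\nfctdef}$ using a family of singleton models, concluding that it must be a boolean combination of global atoms; then exhibit two tiny models that agree on $\chi$ at a designated facet but disagree on whether $\varphi$ is defined there, which no boolean combination of global atoms can tell apart. (Throughout I assume, as in the motivating discussion, that $a$, $b$, $c$ are distinct agents.)

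\textbf{Stage~1.} For each agent $d \in A$ let $\C_d$ be the model with a single vertex $v$, $\chi(v) = d$, and empty valuation; its only facet is $\{v\}$ with $\chi(\{v\}) = \{d\}$. A~routine induction on $\psi \in \glocalK$ shows that $\C_d, \{v\} \fctdef \psi$ holds iff every local atom in $\psi$ lies in $P_d$ and every modal operator in $\psi$ is $\M_d$: the base cases are immediate, $\neg$ and $\land$ preserve (un)definedness, and $\M_{a'}\theta$ is defined at $\{v\}$ only when $a' = d$ (as $\{v\}$ is $a'$-adjacent only to itself, and only for $a'=d$) and $\theta$ is defined at $\{v\}$. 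Now $\varphi = p_a \land K_b p_c$ is undefined at $\{v\}$ in every $\C_d$: if $d \neq a$ then $p_a$, hence $\varphi$, is undefined since $a \notin \{d\}$; and if $d = a$ then there is no $b$-adjacent facet, so $\M_b \neg p_c$, hence $K_b p_c$, hence $\varphi$, is undefined. By~\eqref{eq:simple_trans_ex}, $\C_d, \{v\} \fctsat \varphi^{\nfctdef}$ for every $d$, and since $\C,X \fctsat \psi$ implies $\C,X \fctdef \psi$ (an easy induction), $\varphi^{\nfctdef}$ is defined at $\{v\}$ in every $\C_d$. Instantiating the induction claim with the distinct agents $a$ and $b$ forces $\varphi^{\nfctdef}$ to contain no local atoms and no modal operators, i.e., to be a boolean combination of global atoms. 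Such a formula is defined at every facet, and its truth value at a facet $X$ depends only on $\chi(X)$.

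\textbf{Stage~2.} Build two models with a common facet $X$ satisfying $\chi(X) = \{a,b\}$. Let $\C_{\mathsf{undef}}$ consist of the single facet $X = \{v_a, v_b\}$ with $\chi(v_a) = a$, $\chi(v_b) = b$ and empty valuation; here $\varphi$ is undefined at $X$, since the only $b$-adjacent facet is $X$ itself and $c \notin \chi(X)$, so $K_b p_c$ is undefined. Let $\C_{\mathsf{def}}$ extend $\C_{\mathsf{undef}}$ by a vertex $w$ with $\chi(w) = c$ and the facet $Z = \{v_b, w\}$ (so $X$ and $Z$ are $b$-adjacent via $v_b$); here $\varphi$ is defined at $X$, since $a \in \chi(X)$ and $Z$ witnesses $\C_{\mathsf{def}}, X \fctdef K_b p_c$. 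By~\eqref{eq:simple_trans_ex}, $\varphi^{\nfctdef}$ must be true at $X$ in $\C_{\mathsf{undef}}$ but not true at $X$ in $\C_{\mathsf{def}}$; yet, being a boolean combination of global atoms with $\chi(X) = \{a,b\}$ in both models, it takes the same truth value at $X$ in both --- a contradiction.

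\textbf{Main obstacle.} The conceptual crux is Stage~1: one must recognize that the genuinely three-valued feature at play is the propagation of undefinedness through every connective --- in particular, a disjunction cannot ``repair'' an undefined disjunct --- and then turn this into the rigid constraint, via the singleton models, that $\varphi^{\nfctdef}$ be a boolean combination of global atoms. Once that constraint is in hand, Stage~2 is just small-model bookkeeping.
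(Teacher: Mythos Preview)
Your proof is correct and reaches the same endpoint as the paper's, but via a somewhat different decomposition. The paper writes an arbitrary candidate as $\xi(\M_{i_1}\psi_1,\dots,\M_{i_m}\psi_m)$ with $\xi$ purely propositional and splits into three cases on the outermost modalities: (i)~some $i_j=a$, dispatched by a facet with $a\notin\chi(X)$; (ii)~some $i_j\ne a$, dispatched by a facet with $\chi(X)=\{a\}$; (iii)~$m=0$, dispatched by the same two-model comparison you use in Stage~2 (with local valuations matched so that local atoms need not be separately excluded). Your Stage~1 replaces cases~(i)--(ii) by a single uniform argument: the singleton models $\C_d$ force $\varphi^{\nfctdef}$ to be defined for every~$d$, and the characterization of definability in~$\C_d$ then rules out all local atoms and all modalities in one stroke. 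This is arguably cleaner---you avoid the syntactic case split and in Stage~2 need not worry about matching local valuations---while the paper's version makes the specific obstructing subformula more explicit in each case.
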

\begin{proof}
	\rev{For~any $\varphi^{\nfctdef} \in \glocalK$, there exists a unique purely propositional formula~$\xi(x_1,\dots,x_m)$, with~$m\geq 0$, such that $\varphi^{\nfctdef} = \xi\bigl(\M_{i_1}\psi_1,\dots,\M_{i_m} \psi_m\bigr)$ for some agents~$i_1,\dots, i_m \in A$ and formulas~$\psi_1,\dots, \psi_m \in \glocalK$. 
	We~consider three cases: 
	(i)~$i_j=a$ for some~$1 \leq j \leq m$,  
	(ii)~$i_j\ne a$  for some~$1 \leq j \leq m$,~and 
	(iii)~$m=0$, i.e.,~$\varphi^{\nfctdef}=\xi$ is purely propositional.} 
	For~each case, we construct $\C= (C, \chi, \ell)$~and $X \in \FF(C)$ to violate~\eqref{eq:simple_trans_ex}.
	\begin{compactenum}[(i)]
		\item
			$\C, X \nfctdef p_a \et K_b p_c$ if~$a \notin \chi(X)$, but~$\C, X \nfctdef \varphi^{\nfctdef}$ because of~$\M_a \rev{\psi_j}$. 
		\item
			$\C, X \nfctdef p_a \et K_b p_c$ if~$ \chi(X)=\{a\}$, but~$\C, X \nfctdef \varphi^{\nfctdef}$ because of~$\rev{\M_{i_j} \psi_j}$. 
		\item
			Consider~$(\C,X)$~and $(\C',X')$ depicted below with the same evaluations of all~$p_a \in P_a$ on both $a$-colored vertices and of all~$p_b \in P_b$ on both $b$-colored ones.
			Then~$\C, X \nfctdef p_a \et K_b p_c$ while~$\C', X' \fctdef p_a \et K_b p_c$. 
			Thus, according to~\eqref{eq:simple_trans_ex}, any~$\varphi^{\nfctdef}$ should distinguish $(\C,X)$ from~$(\C',X')$. 
			But~it is easy to prove by construction \rev{that any purely propositional  formula~$\xi$} has the same truth value (true,~false,~or undefined) in~$(\C,X)$ as in~$(\C',X')$. 
		\begin{center}
			\scalebox{.8}{
				\begin{tikzpicture}
					\node (c) at (-1,0) {$\C:$};
					\node[b dot] (b1) at (0,0) {$a$};
					\node[a dot] (a0) at (2,0) {$b$};
					\draw[-] (b1) -- node[above] {$X$} (a0);
				\end{tikzpicture}
				\qquad\qquad
				\begin{tikzpicture}
					\node (c) at (-1,0) {$\C':$};
					\node[b dot] (b1) at (0,0) {$a$};
					\node[c dot] (b0) at (4,0) {$c$};
					\node[a dot] (a0) at (2,0) {$b$};
					\draw[-] (b1) -- node[above] {$X'$} (a0);
					\draw[-] (a0) -- (b0);
				\end{tikzpicture}
			}
		\end{center}
	\end{compactenum}%
	\vspace{-.48cm}
\end{proof}

Since a purely syntactical transformation from undefined to true is impossible, we will present such a transformation modulo  a given~$(\C,X)$
(Lemma~\ref{formula.for.Non.embedded.LT}). 
The~construction relies on the modal structure of formula~$\varphi$, which is encoded in a form of what we call  a \emph{life tree}. 
With~this localized transformation in place, we will be able to complete the proof of the remaining direction of Hennessy--Milner along the lines of the standard proof (Prop.~\ref{hehelastone}).

\begin{definition}
A~\emph{life tree}~$\lt = (\Tree,\treeLab)$ is a directed rooted tree~$\Tree=(V, E)$  
 supplied with a function~$\treeLab$ that labels each node~$\sigma \in V$ with a set~$\treeLab(\sigma) \subseteq A$ of agents and each edge~$(\sigma,\tau) \in E$ with an agent~$\treeLab(\sigma,\tau)\in A$ such that $\treeLab$~satisfies the  property
 $
 \treeLab(\sigma, \tau) \in \treeLab(\sigma) \cap \treeLab(\tau)
 $
 for each~$(\sigma,\tau) \in E$.
 We~call~$(\sigma,\tau)$ an  \emph{$a$-edge} if{f} $\treeLab(\sigma,\tau) = a$. 
 For~the root of a life tree, we typically use~$\rho$, possibly with a subscript, for instance, $\rho_1$~could be used for the root of~$\Tree_1$.   
 The~\emph{$a$-grafting~$\lt^a$} of a life tree~$\lt$ is obtained by adding~$a$ to the label of its root.
 \end{definition}

\begin{definition}
\label{def:lifeform}
The \emph{life tree~$\ltf{\varphi}$ of a formula~$\varphi \in \glocalK$} is a labeled tree~$(\Tree,\treeLab)$ defined by recursion on the construction of~$\varphi$:
	\begin{compactitem}
		\item $\ltf{a}$~is the  tree consisting of a single root~$\rho$ with~$\treeLab(\rho)=\varnothing$;
		\item $\ltf{p_a}$~is the  tree consisting of a single root~$\rho$ with~$\treeLab(\rho)=\{a\}$;
		\item $\ltf{\neg \varphi} \colonequals \ltf{\varphi}$;
		\item $\ltf{\varphi \land \psi}$~is obtained by merging the roots~$\rho_\varphi$ of~$\ltf{\varphi}=(\Tree_\varphi,\treeLab_\varphi)$~and $\rho_\psi$~of~$\ltf{\psi}=(\Tree_\psi,\treeLab_\psi)$ and their labels, i.e.,~by first taking the disjoint union of~$\ltf{\varphi}$~and~$\ltf{\psi}$ with all the labeling preserved and then merging both roots~$\rho_\varphi$~and~$\rho_\psi$ into a new root~$\rho$ with the label~$\treeLab_\varphi(\rho_\varphi) \cup \treeLab_\psi(\rho_\psi)$ without changing labels of any edges or of any other nodes;
		\item $\ltf{\M_a \varphi}$~is obtained by adding a  new root~$\rho$ to  the $a$-grafting~$\ltf{\varphi}^a$ of~$\ltf{\varphi}$, labeling this new root~$\rho$ with~$\{a\}$,  and adding a new $a$-edge from~$\rho$ to the root~$\rho_\varphi$ of~$\ltf{\varphi}^a$, with~$\treeLab$~taking over all labels of~$\ltf{\varphi}^a$ without change.
	\end{compactitem}
\end{definition}

It~is easy to show by induction on the construction of formula life trees that

\begin{proposition}[Correctness for formula life trees]
\label{prop:life_tree}
The~life tree\/~$\ltf{\varphi}$ 
of any formula~$\varphi \in \glocalK$ is a life tree. 
In~addition,
for any child~$\sigma$ of the root  of~$\Tree$,  the subtree of\/~$\ltf{\varphi}$ rooted in~$\sigma$ is the $a$-grafting\/~$\ltf{\psi}^a$ of the life tree\/~$\ltf{\psi}$ of  some subformula~$\psi$ of~$\varphi$ for some agent~$a$.
\end{proposition}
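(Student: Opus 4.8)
Proposition~\ref{prop:life_tree} is proved by structural induction on~$\varphi \in \glocalK$, establishing both claims simultaneously: that $\ltf{\varphi} = (\Tree, \treeLab)$ satisfies the defining conditions of a life tree --- crucially the edge condition $\treeLab(\sigma,\tau) \in \treeLab(\sigma) \cap \treeLab(\tau)$ for every edge~$(\sigma,\tau)$ --- and that the subtree hanging off any child of the root of~$\Tree$ equals $\ltf{\psi}^a$ for some subformula~$\psi$ of~$\varphi$ and some agent~$a$. The two atomic cases $\varphi = a$ and $\varphi = p_a$ are immediate: $\ltf{\varphi}$ is a single-node tree, so the edge condition is vacuous, and the root has no children, so the second claim is vacuous too. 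The case $\varphi = \neg\psi$ is also immediate, since $\ltf{\neg\psi} = \ltf{\psi}$ by definition and every subformula of~$\psi$ is a subformula of~$\neg\psi$, so both claims transfer verbatim from the induction hypothesis for~$\psi$.

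For $\varphi = \psi \land \theta$, I would first note that merging the roots of the disjoint trees $\ltf{\psi}$ and $\ltf{\theta}$ produces again a directed rooted tree. Every edge of $\ltf{\psi \land \theta}$ is an edge of $\ltf{\psi}$ or of $\ltf{\theta}$ with its original label, and every node other than the new root~$\rho$ keeps its original label; the new root~$\rho$ gets label $\treeLab_\psi(\rho_\psi) \cup \treeLab_\theta(\rho_\theta)$, a superset of each of the two old root labels. Hence the edge condition, valid by the induction hypothesis with the smaller labels, remains valid with the enlarged label at~$\rho$. For the second claim, a child of~$\rho$ is a child of~$\rho_\psi$ in $\ltf{\psi}$ or of~$\rho_\theta$ in $\ltf{\theta}$, and the subtree below such a child is unchanged by the merge, so the induction hypothesis for~$\psi$ or for~$\theta$ applies directly (and subformulas of~$\psi$ or~$\theta$ are subformulas of~$\psi \land \theta$).

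For $\varphi = \M_a\psi$, I would first check that the $a$-grafting $\ltf{\psi}^a$ is still a life tree: it differs from $\ltf{\psi}$ only in that its root label is enlarged to $\treeLab_\psi(\rho_\psi) \cup \{a\}$, which preserves the edge condition by the same superset argument. Adding the new root~$\rho$ with $\treeLab(\rho) = \{a\}$ together with the single $a$-edge~$(\rho, \rho_\psi)$ keeps the structure a tree, and this new edge satisfies the condition because $a \in \{a\} = \treeLab(\rho)$ and $a \in \treeLab_\psi(\rho_\psi) \cup \{a\}$. Finally, the unique child of~$\rho$ is~$\rho_\psi$, whose subtree is exactly $\ltf{\psi}^a$ with~$\psi$ a subformula of~$\M_a\psi$ --- which is precisely the second claim.

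I do not expect a genuine obstacle: the proof is routine. The only point demanding care is the bookkeeping in the conjunction and grafting steps, namely the observation that merging roots, respectively grafting, only ever enlarges root labels and never alters edge labels or the labels of non-root nodes, so that the intersection condition transfers unchanged from the induction hypothesis to the larger structure.
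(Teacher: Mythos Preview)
Your proposal is correct and follows exactly the approach the paper indicates: the paper itself merely states that the proposition ``is easy to show by induction on the construction of formula life trees'' without spelling out any details, and your structural induction on~$\varphi$ is precisely that argument carried out in full.
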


\begin{example}
	\label{eg:formula_and_life_tree}
The~life trees of three formulas are depicted in Fig.~\ref{fig:sample_life_trees}.  
Note~that $\ltf{\M_b \neg p_d \land \M_c p_d}^a$~is a subtree of~$\ltf{\M_a(\M_b \neg p_d \land \M_c p_d)}$. 

	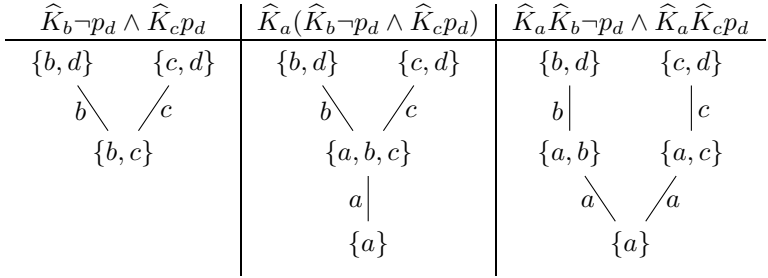
\begin{figure}[h]
	\centering
		\begin{tabular}{c|c|c}
			$\M_b \neg p_d \land \M_c p_d$ & 
			$\M_a(\M_b \neg p_d \land \M_c p_d)$ & 
			$\M_a\M_b \neg p_d \land \M_a \M_c p_d$ 
			\\
			\hline
			\begin{tikzpicture}[scale=0.8]
				\node (root) at (0,0) {\phantom{$\{a\}$}};
				\node (L1) at (0,1.5) {$\{b,c\}$};
				\node (L2l) at (-1,3) {$\{b,d\}$};
				\node (L2r) at (1,3) {$\{c,d\}$};
				
				\draw (L1)--(L2l)  node at (-0.7,2.2) {$b$};
				\draw (L1)--(L2r)  node at (0.7,2.2) {$c$};
			\end{tikzpicture}
			&
			\begin{tikzpicture}[scale=0.8]
				\node (root) at (0,0) {$\{a\}$};
				\node (L1) at (0,1.5) {$\{a,b,c\}$};
				\node (L2l) at (-1,3) {$\{b,d\}$};
				\node (L2r) at (1,3) {$\{c,d\}$};
				
				\draw (root)--(L1)  node at (-0.2,0.7) {$a$};
				\draw (L1)--(L2l)  node at (-0.7,2.2) {$b$};
				\draw (L1)--(L2r)  node at (0.7,2.2) {$c$};
			\end{tikzpicture}
			&
			\begin{tikzpicture}[scale=0.8]
				\node (root) at (0,0) {$\{a\}$};
				\node (L1l) at (-1,1.5) {$\{a,b\}$};
				\node (L1r) at (1,1.5) {$\{a,c\}$};
				\node (L2l) at (-1,3) {$\{b,d\}$};
				\node (L2r) at (1,3) {$\{c,d\}$};
				
				\draw (root)--(L1l) node at (-0.7,0.7) {$a$};
				\draw (root)--(L1r) node at (0.7,0.7) {$a$};
				\draw (L1l)--(L2l) node at (-1.2,2.2) {$b$};
				\draw (L1r)--(L2r) node at  (1.2,2.2) {$c$};
			\end{tikzpicture}
		\end{tabular}
		\caption{Life trees of three sample formulas}
		\label{fig:sample_life_trees}
	\end{figure}
\end{example}

\begin{definition}
	\label{def:embedding}
	An~\emph{embedding} of a  life tree~$\lt= (\Tree,\treeLab)$  with~$\Tree=(V,E)$  into~$(\C,X)$ with~$\C = (C, \chi, \ell)$ is a function~$e \colon V \to \FF(C)$  such that 
\begin{compactenum}
\item\label{emb:root} 
	the~root~$\rho$ of~$\Tree$ is mapped to~$X$, i.e.,~$e(\rho)= X$;
\item\label{emb:node} 
	$\treeLab(\sigma)\subseteq \chi\bigl(e(\sigma)\bigr)$ for all tree nodes~$\sigma\in V$;
\item\label{emb:edge} 
	$\treeLab(\sigma,\tau) \in \chi\bigl(e(\sigma) \cap e(\tau)\bigr)$ for all tree edges~$(\sigma,\tau)\in E$.
\end{compactenum}
Then~we write~$\C,X \ltdef_e \lt$, or simply~$\C,X \ltdef \lt$\rev{,} if such an embedding exists.
\end{definition}

The~following two lemmas show that formula life trees provide a practical and compact syntactical description of definability.

\begin{lemma}[Equivalence of embeddability and definability]
\label{EmbeddingAsDefinability}
For~any formula~$\varphi \in \glocalK$ and pointed simplicial model\/~$(\C,X)$,
\begin{equation}
\label{eq:embed_life}
\C, X \fctdef \varphi 
\quad\Longleftrightarrow\quad 
\C,X \ltdef \ltf{\varphi}.
\end{equation}
\end{lemma}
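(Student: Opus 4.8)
The plan is to prove both directions of~\eqref{eq:embed_life} together, by induction on the construction of~$\varphi \in \glocalK$, reading the recursive clauses of Definition~\ref{lanKa.fct.defsat} for~$\fctdef$ against those of Definition~\ref{def:lifeform} for~$\ltf{\varphi}$. (Proposition~\ref{prop:life_tree} guarantees that $\ltf\varphi$ is a genuine life tree, so the right-hand side is well-posed.) The base cases are immediate: $\ltf{a}$ is a single root with empty label, which embeds into any~$(\C,X)$, matching ``$\C,X\fctdef a$ always''; $\ltf{p_a}$ is a single root labeled~$\{a\}$, which embeds into~$(\C,X)$ iff $a\in\chi(X)$, matching ``$\C,X\fctdef p_a$ iff $a\in\chi(X)$''. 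The case $\varphi=\neg\psi$ is trivial since $\ltf{\neg\psi}=\ltf{\psi}$ and $\C,X\fctdef\neg\psi\iff\C,X\fctdef\psi$.

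Before the two inductive cases I would isolate the only real bookkeeping into two observations about embeddings into a fixed~$(\C,X)$ with $\C=(C,\chi,\ell)$. \emph{Grafting:} $\C,X\ltdef\lt^a$ iff $\C,X\ltdef\lt$ and $a\in\chi(X)$, since $\lt^a$ differs from~$\lt$ only by adding~$a$ to the root label, so the sole embedding condition affected is the set-inclusion at the root. \emph{Merging:} if~$\lt$ arises by merging the roots and root-labels of life trees~$\lt_1,\lt_2$ whose node sets are otherwise disjoint, then $\C,X\ltdef\lt$ iff $\C,X\ltdef\lt_1$ and $\C,X\ltdef\lt_2$; left-to-right by restriction (the root of each~$\lt_i$ going to~$X$), and right-to-left by gluing two embeddings that already agree in sending their roots to~$X$, where disjointness of node sets makes the glued map well-defined and $\chi(X)$ still contains the union of the two root labels.

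Given these, $\varphi=\psi\land\chi$ is immediate, as $\ltf{\psi\land\chi}$ is exactly the root-merge of~$\ltf{\psi}$ and~$\ltf{\chi}$: $\C,X\fctdef\psi\land\chi$ iff $\C,X\fctdef\psi$ and $\C,X\fctdef\chi$ iff (IH) $\C,X\ltdef\ltf{\psi}$ and $\C,X\ltdef\ltf{\chi}$ iff (Merging) $\C,X\ltdef\ltf{\psi\land\chi}$. For $\varphi=\M_a\psi$, recall that $\ltf{\M_a\psi}$ is a fresh root~$\rho$ with label~$\{a\}$ joined by an $a$-edge to the root~$\rho_\psi$ of~$\ltf{\psi}^a$. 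Forward: from $\C,X\fctdef\M_a\psi$ take $Y\in\FF(C)$ with $a\in\chi(X\cap Y)$ and $\C,Y\fctdef\psi$; the IH gives $\C,Y\ltdef\ltf{\psi}$, hence $\C,Y\ltdef\ltf{\psi}^a$ by Grafting (using $a\in\chi(Y)$); extending the corresponding embedding by $\rho\mapsto X$ — the new root clause needs $a\in\chi(X)$ and the new $a$-edge clause needs $a\in\chi(X\cap Y)$, both at hand — yields $\C,X\ltdef\ltf{\M_a\psi}$. Conversely, an embedding~$e$ of~$\ltf{\M_a\psi}$ into~$(\C,X)$ has $e(\rho)=X$; setting $Y\ce e(\rho_\psi)$, the $a$-edge clause forces $a\in\chi(X\cap Y)$, and restricting~$e$ to the subtree rooted at~$\rho_\psi$ embeds~$\ltf{\psi}^a$, hence~$\ltf{\psi}$, into~$(\C,Y)$, so $\C,Y\fctdef\psi$ by the IH and thus $\C,X\fctdef\M_a\psi$.

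The argument is essentially a routine transcription of the two recursive definitions against each other. The one place needing care, which I expect to be the main (minor) obstacle, is the Merging observation: one must invoke the disjointness of the underlying node sets so that gluing two embeddings is well-defined, and check that merging root-labels does not break the node-label inclusion at the new root. Everything else follows the definitions step for step.
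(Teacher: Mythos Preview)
Your proposal is correct and follows essentially the same inductive argument as the paper's proof. The only difference is presentational: you isolate the \emph{Grafting} and \emph{Merging} observations as reusable facts, whereas the paper carries out the corresponding embedding constructions in-line within the conjunction and modality cases; the underlying verifications are identical.
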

\begin{proof}
	Let~$\C = (C, \chi, \ell)$ and $\ltf{\varphi} = \bigl((V,E),\treeLab\bigr)$ with root~$\rho \in V$.
	We~prove~\eqref{eq:embed_life} for all $X \in \FF(C)$ by induction on \rev{a formula}~$\varphi \in \glocalK$.  
	\begin{compactitem}
		\item For~$\varphi=a$: 
		Since $\C, X \fctdef a$ always holds, we need to show that~$\C, X \ltdef \ltf{a}$. 
		By~Def.~\ref{def:embedding}\eqref{emb:root}, no function other than~$e= \{(\rho,X)\}$ can be an embedding  of~$\ltf{a}$ into~$(\C,X)$. 
		To show that it is, given that $\ltf{a}$~has no edges, it suffices to check Def.~\ref{def:embedding}\eqref{emb:node}, which holds since~$\treeLab(\rho) = \varnothing \subseteq \chi(X)$.
		
		\item For~$\varphi=p_a$: 
		Again, Def.~\ref{def:embedding}\eqref{emb:root} means that~$e(\rho)=X$ and there is at most one embedding of~$\ltf{p_a}$ into~$(\C,X)$. 
		Since in this case $\treeLab(\rho) = \{a\}$, it remains to note that~$\C, X \fctdef p_a$ if{f} $a \in \chi (X)$ if{f} $\treeLab (\rho) \subseteq \chi(X)$ if{f} $\C, X \ltdef \ltf{p_a}$.
		
		\item For~$\varphi=\neg \psi$:  
		$\C,X \fctdef \neg \psi$ if{f} $\C,X \fctdef \psi$ if{f} by the~IH $\C,X \ltdef \ltf{\psi}$~if{f} $\C,X \ltdef \ltf{\neg \psi}$.
		
		\item For~$\varphi=\psi \land \xi$: 
		By~Def.~\ref{lanKa.fct.defsat},  $\C,X \fctdef \psi \land \xi$ if{f} $\C,X \fctdef \psi$~and $\C,X \fctdef \xi$. 
		By~the~IH, this is equivalent to~$\C,X \ltdef \ltf{\psi}$~and $\C,X \ltdef \ltf{\xi}$. 
		It~remains to  prove that  this is equivalent to~$\C,X \ltdef \ltf{\psi \land \xi}$. 
		Let~\rev{the} life trees~$\ltf{\psi}=\bigl((V_\psi,E_\psi),\treeLab_\psi\bigr)$~and $\ltf{\xi}=\bigl((V_\xi,E_\xi),\treeLab_\xi\bigr)$ have roots~$\rho_\psi$~and~$\rho_\xi$ respectively.
		
		Assume first that~$\C,X \ltdef_{e_\psi} \ltf{\psi}$~and $\C,X \ltdef_{e_\xi} \ltf{\xi}$ for~$e_\psi\colon V_\psi \to \FF(C)$~and $e_\xi\colon V_\xi \to \FF(C)$. 
		Note~that~$e_\psi(\rho_\psi) = e_\xi(\rho_\xi) = X$ by Def.~\ref{def:embedding}\eqref{emb:root}. 
		It~is easy to see that~$\C,X \ltdef_e \ltf{\psi \land \xi}$ for \rev{the} function~$e \colon V \to \FF(C)$ defined~by
		\[
		e(\theta) 
		\ce
		\begin{cases}
			X & \text{ if $\theta = \rho$},
			\\
			e_\psi(\theta) & \text{ if  $\theta \in V_\psi \setminus \{\rho_\psi\}$},
			\\
			e_\xi(\theta) & \text{ if  $\theta \in V_\xi \setminus \{\rho_\xi\}$}.
		\end{cases}
		\] 
		
		Assume now, conversely, that~$\C,X \ltdef_e \ltf{\psi \land \xi}$ for~$e \colon V \to \FF(C)$. 
		It~is easy to see that~$\C,X \ltdef_{e_\psi} \ltf{\psi}$~and $\C,X \ltdef_{e_\xi} \ltf{\xi}$ for~$e_\psi \colon V_\psi \to \FF(C)$~and $e_\xi \colon V_\xi \to \FF(C)$ defined~by	
		\[
		e_\psi(\sigma) 
		\ce
		\begin{cases}
			X & \text{ if $ \sigma=r_\psi$},
			\\
			e(\sigma) & \text{ if $\sigma \in V_\psi \setminus \{r_\psi\}$},
		\end{cases}
		\qquad
		e_\xi(\tau) 
		\ce
		\begin{cases}
			X & \text{ if $ \tau= \rho_\xi$},
			\\
			e(\tau) & \text{ if $\tau \in V_\xi \setminus \{r_\xi\}$}.
		\end{cases}
		\] 
		
		\item For~$\varphi = \M_a \psi$: 
		By Def.~\ref{lanKa.fct.defsat},  $\C,X \fctdef \M_a \psi$ if{f} $\C,Y \fctdef \psi$ for some~$Y \in \FF(C)$ with~$a \in \chi(X \cap Y)$. 
		By~the~IH,  the latter is equivalent to~$\C,Y \ltdef \ltf{\psi}$ for some~$Y \in \FF(C)$ with~$a \in \chi(X \cap Y)$. 
		It remains to prove that this is equivalent to~$\C,X \ltdef \ltf{\M_a \psi}$. 
		Let \rev{the} life tree $\ltf{\psi}=\bigl((V_\psi,E_\psi),\treeLab_\psi\bigr)$  have root~$\rho_\psi$, which is the only child of~$\rho$ in~$\ltf{\M_a \psi}$, where~$V = V_\psi \sqcup \{\rho\}$.
		
		Assume first that~$\C,X \ltdef_e \ltf{\M_a \psi}$, for some~$e\colon V \to \FF(C)$. 
		By~Def.~\ref{def:embedding}\eqref{emb:root},  $e(\rho)=X$.  
		For~$Y = e(\rho_\psi)$, it is easy to see that~$\C, Y \ltdef_{e \upharpoonright V_\psi} \ltf{\psi}^a$
		and, hence, $\C, Y \ltdef_{e \upharpoonright V_\psi} \ltf{\psi}$.  
		Finally, we have $a = \treeLab(\rho, \rho_\psi) \in \chi\bigl(e(\rho) \cap e(\rho_\psi)\bigr)$ by Def.~\ref{def:embedding}\eqref{emb:edge}, i.e.,~$a \in \chi(X \cap Y)$. 
		
		Assume now, conversely, that~$\C,Y \ltdef_{e_\psi} \ltf{\psi}$ for some~$Y \in \FF(C)$ with~\mbox{$a \in \chi(X \cap Y)$}~and $e_\psi \colon V_\psi \to \FF(C)$. 
		Let~us show that~$\C,X \ltdef_{e} \ltf{\M_a \psi}$ for~$e \ce e_\psi \sqcup \{(\rho,X)\}$. 
		Indeed, $e(\rho) = X$ fulfilling Def.~\ref{def:embedding}\eqref{emb:root}. 
		To~show  Def.~\ref{def:embedding}\eqref{emb:node} for~$\rho$, note that~$\treeLab(\rho) = \{a\} \subseteq \chi(X\cap Y) \subseteq \chi(X)= \chi\bigl(e(\rho)\bigr)$. 
		$\treeLab(\rho, \rho_\psi) = a \in \chi(X \cap Y) =  \chi\bigl(e(\rho) \cap e(\rho_\psi)\bigr)$ proves Def.~\ref{def:embedding}\eqref{emb:edge} for~$(\rho,\rho_\psi)$. 
		\[
		\treeLab(\rho_\psi) = \{a\}\cup\treeLab_\psi(\rho_\psi) \subseteq \chi(X\cap Y)  \cup \chi\bigl(e_\psi(\rho_\psi)\bigr)\subseteq \chi(Y) \cup \chi(Y)= \chi\bigl(e(\rho_\psi)\bigr)
		\] 
		shows Def.~\ref{def:embedding}\eqref{emb:node} for~$\rho_\psi$.
		Def.~\ref{def:embedding}\eqref{emb:node}--\eqref{emb:edge}   for the remaining  nodes and edges easily follow from the same properties of~$e_\psi$.
		Thus, $\C,X \ltdef  \ltf{\M_a \psi}$.
	\end{compactitem}
	\vspace{-.45cm}
\end{proof}

\begin{lemma}[Criterion of non-embeddability]
	\label{Non.embedded.Tree}
	Let\/~$(\C,X)$ be a pointed simplicial model with~$\C = (C, \chi, \ell)$ and\/ $\ltf{\varphi} = \bigl((V,E),\treeLab\bigr)$ be the life tree of a formula~$\varphi \in \glocalK$ with root~$\rho$. 
	Then~$\C,X\nltdef\ltf{\varphi}$ \rev{if{f}} at least one of the following two criteria is satisfied:
	\begin{compactenum}
	\item\label{critone} 
		either\/~$\treeLab(\rho) \nsubseteq \chi(X)$~or 
	\item\label{crittwo} 
		there is a subformula~$\psi$ of~$\varphi$ and an $a$-edge\/~$(\rho,\rho_\psi)\in E$ to the root~$\rho_\psi$  of $a$-grafting\/~$\ltf{\psi}^a$ such that~$\C,Y\nltdef\ltf{\psi}$ for any~$Y \in \FF(C)$ with~$a \in \chi(X \cap Y)$.
	\end{compactenum}
\end{lemma}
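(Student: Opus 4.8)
The plan is to reason directly from the definition of embedding (Def.~\ref{def:embedding}), using the structural description of $\ltf{\varphi}$ furnished by Prop.~\ref{prop:life_tree}: the root $\rho$ of $\ltf{\varphi}$ has children $\rho_1,\dots,\rho_n$ (possibly $n=0$), each $\rho_j$ is reached by an $a_j$-edge and is the root of the $a_j$-grafting $\ltf{\psi_j}^{a_j}$ of the life tree of a subformula $\psi_j$ of $\varphi$, and the vertex sets of these subtrees partition $V\setminus\{\rho\}$. Throughout I would repeatedly use the elementary fact that a function is an embedding of $\ltf{\psi}^a$ into $(\C,Y)$ iff it is an embedding of $\ltf{\psi}$ into $(\C,Y)$ together with $a\in\chi(Y)$, since the two life trees differ only in the extra agent $a$ in the root label.

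For the ``if'' direction: if criterion~\ref{critone} holds, then no $e$ with $e(\rho)=X$ can satisfy Def.~\ref{def:embedding}\eqref{emb:node} at $\rho$, so $\C,X\nltdef\ltf{\varphi}$. If instead criterion~\ref{crittwo} holds, witnessed by a subformula $\psi$ and an $a$-edge $(\rho,\rho_\psi)$, I would assume towards a contradiction that $\C,X\ltdef_e\ltf{\varphi}$ and put $Y\ce e(\rho_\psi)$. Then $a\in\chi(X\cap Y)$ by Def.~\ref{def:embedding}\eqref{emb:edge} applied to $(\rho,\rho_\psi)$, and the restriction of $e$ to the subtree rooted at $\rho_\psi$ (which is exactly $\ltf{\psi}^a$) is an embedding of $\ltf{\psi}^a$, hence of $\ltf{\psi}$, into $(\C,Y)$; thus $\C,Y\ltdef\ltf{\psi}$ with $a\in\chi(X\cap Y)$, contradicting criterion~\ref{crittwo}.

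For the ``only if'' direction I would argue contrapositively: assuming neither criterion holds, I would build an embedding of $\ltf{\varphi}$ into $(\C,X)$. Since criterion~\ref{crittwo} fails, for each child $\rho_j$ (reached by an $a_j$-edge, rooting $\ltf{\psi_j}^{a_j}$) there exist $Y_j\in\FF(C)$ with $a_j\in\chi(X\cap Y_j)$ and an embedding $e_j$ of $\ltf{\psi_j}$ into $(\C,Y_j)$, whence $e_j(\rho_j)=Y_j$. I would then glue these together: define $e\colon V\to\FF(C)$ by $e(\rho)\ce X$ and $e(\tau)\ce e_j(\tau)$ for $\tau$ in the subtree rooted at $\rho_j$, which is well defined because those subtrees partition $V\setminus\{\rho\}$. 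Clause~\eqref{emb:root} is immediate. For~\eqref{emb:node}: at $\rho$ it holds because criterion~\ref{critone} fails; at $\rho_j$ the label in $\ltf{\varphi}$ equals $\{a_j\}\cup\treeLab_{\psi_j}(\rho_j)$, which lies in $\chi(Y_j)$ since $a_j\in\chi(X\cap Y_j)\subseteq\chi(Y_j)$ and $\treeLab_{\psi_j}(\rho_j)\subseteq\chi(Y_j)$ by $e_j$; at every other node it follows from $e_j$, the labels being unchanged. For~\eqref{emb:edge}: on $(\rho,\rho_j)$ it holds since $\treeLab(\rho,\rho_j)=a_j\in\chi(X\cap Y_j)=\chi\bigl(e(\rho)\cap e(\rho_j)\bigr)$, and on every other edge it follows from $e_j$. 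Hence $\C,X\ltdef\ltf{\varphi}$. (When $\varphi$ has no modality, $n=0$: criterion~\ref{crittwo} is vacuous and $e=\{(\rho,X)\}$ is the required embedding, using that criterion~\ref{critone} fails.)

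The step I expect to be the main obstacle is the node clause at the grafted roots $\rho_j$ in the construction above: because grafting enlarges the root label by $a_j$, the sub-embedding $e_j$ of $\ltf{\psi_j}$ does not by itself certify Def.~\ref{def:embedding}\eqref{emb:node} at $\rho_j$, so one has to recover the missing membership $a_j\in\chi(Y_j)$ from the adjacency $a_j\in\chi(X\cap Y_j)$ supplied by the failure of criterion~\ref{crittwo}. The remaining verifications are routine bookkeeping with Prop.~\ref{prop:life_tree} and Def.~\ref{def:embedding}, and no finiteness assumptions are needed.
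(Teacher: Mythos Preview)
Your proof is correct and follows essentially the same approach as the paper: both directions hinge on the structural decomposition from Prop.~\ref{prop:life_tree}, with one direction restricting a given embedding to the subtree rooted at a child of~$\rho$, and the other gluing sub-embeddings of the~$\ltf{\psi_j}$ into a single embedding of~$\ltf{\varphi}$. The only organizational difference is that the paper proves the ``if'' direction contrapositively (assuming an embedding and showing both criteria fail at once), whereas you treat the two criteria separately; your version is slightly more explicit about the node clause at the grafted roots~$\rho_j$, which the paper leaves as ``easy to show.''
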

\begin{proof}
Let~$(\rho,\sigma_1),\dots,(\rho,\sigma_k)\in E$ be all edges from~$\rho$  in~$\ltf{\varphi}$, where~$k \geq 0$ and let~$\treeLab(\rho,\sigma_i)=a_i$.
By Prop.~\ref{prop:life_tree}, each~$\sigma_i$ is the root of~$\ltf{\psi_i}^{a_i}=\bigl((V_i,E_i),\treeLab_i\bigr)$ for some subformula~$\psi_i$ of~$\varphi$. 
In particular, $V = \{\rho\} \sqcup \bigsqcup_{i=1}^k V_i$.

First,~let  $\C,X\nltdef\ltf{\varphi}$, and 
	assume towards a contradiction that both criteria~\eqref{critone}~and~\eqref{crittwo}  fail. 
	Then~for each~$\psi_i$, there is~$Y_i \in \FF(C)$ with~$a_i \in \chi(X \cap Y_i)$ such that~$\C,Y_i	\ltdef_{e_i} \ltf{\psi_i}$ for 
	some~$e_i \colon V_i \to \FF(C)$ with~$e_i(\sigma_i) = Y_i$.
	It~is easy to show that~$\C,X \ltdef_e \ltf{\varphi}$ for
	\[
	 	e(\tau) 
		\ce
		\begin{cases}
		 X & \text{ if $ \tau= \rho$},
		 \\
		 e_i(\tau) & \text{ if  $\tau \in V_i$}.
		 \end{cases}
	\]
	
	Assume now that~$\C,X\ltdef_e\ltf{\varphi}$ for some~$e \colon V \to \FF(C)$, where~$e(\rho) = X$ by Def.~\ref{def:embedding}\eqref{emb:root}~and $\treeLab(\rho) \subseteq \chi(X)$ by Def.~\ref{def:embedding}\eqref{emb:node}. 
	Hence,  \eqref{critone}~fails. 
	Let~$Y_i \ce e(\sigma_i)$. 
	It~is easy to verify that~$\C,Y_i \ltdef_{e\upharpoonright V_i} \ltf{\psi_i}$. 
	It~remains to note that, by Def.~\ref{def:embedding}\eqref{emb:edge}, $a_i =\treeLab(\rho,\sigma_i) \in \chi\bigl(e(\rho) \cap e(\sigma_i)\bigr)$, i.e.,~$ a_i  \in \chi(X \cap Y_i)$. 
	Thus, \eqref{crittwo}~also fails.
\end{proof}

Using life trees, we can now provide the localized transformation from undefined to true formulas:
\begin{lemma}
	\label{formula.for.Non.embedded.LT}
	Let~$\C,X \nfctdef \varphi$ for some formula~$\varphi \in \glocalK$ and some star-finite pointed simplicial model\/~$(\C,X)$. 
	Then~there is a formula~$\for{X}{\Tree} \in \glocalK$ such that 
	\begin{gather}
	\label{eq:notembeddable}
	\C,X \fctsat \neg \for{X}{\Tree},
	\\
	\label{eq:embeddable}
	\C',X' \fctdef \varphi \quad\Longrightarrow\quad \C',X'  \fctsat \for{X}{\Tree} \quad\text{for all\/ $(\C',X')$}. 
	\end{gather}
\end{lemma}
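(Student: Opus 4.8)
The plan is to build $\for{X}{\Tree}$ by recursion on the life tree $\ltf{\varphi}$, or rather on the structure of $\varphi$, guided by the non-embeddability criterion of Lemma~\ref{Non.embedded.Tree}. Since $\C,X \nfctdef \varphi$, Lemma~\ref{EmbeddingAsDefinability} gives $\C,X \nltdef \ltf{\varphi}$, so one of the two criteria of Lemma~\ref{Non.embedded.Tree} holds at the root $\rho$. The target formula should be true in $\C,X$ (equation~\eqref{eq:notembeddable} says its negation is satisfied, i.e.\ it is \emph{false} in $\C,X$) while being defined — hence automatically not false — in every $(\C',X')$ where $\varphi$ is defined. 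I would first handle the two base-like obstructions and then recurse.

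First, if criterion~\eqref{critone} holds, i.e.\ $\treeLab(\rho) \nsubseteq \chi(X)$, pick some agent $a \in \treeLab(\rho) \setminus \chi(X)$. Tracing through Def.~\ref{def:lifeform}, $a$ appears in the root label because either some local atom $p_a$ occurs ``at top level'' in $\varphi$ (outside any modality) or some $\M_a\psi$ occurs at top level. In the first case take $\for{X}{\Tree} \ce \neg p_a$: since $a \notin \chi(X)$, $\C,X\nfctdef p_a$, so $\C,X\fctdef \neg p_a$ fails too, giving $\C,X\fctsat \neg\neg p_a$ — wait, I want $\C,X\fctsat\neg\for{X}{\Tree}$, and $\C,X\nfctdef p_a$ means $\neg p_a$ is undefined, which does not give what I want. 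The correct choice is instead $\for{X}{\Tree} \ce a$ (the global atom), which is always defined, hence satisfied in every $(\C',X')$ where $\varphi$ is defined, but since $a\notin\chi(X)$ we have $\C,X\nfctsat a$ while $\C,X\fctdef a$, so $\C,X\fctsat\neg a$. This handles~\eqref{critone} uniformly regardless of why $a$ entered the label, because $\C',X'\fctdef\varphi$ forces $\treeLab(\rho_{\varphi})\subseteq\chi'(X')$ by Lemma~\ref{EmbeddingAsDefinability} and Def.~\ref{def:embedding}\eqref{emb:node}, so in particular $a\in\chi'(X')$ and $\C',X'\fctsat a$.

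Second, if criterion~\eqref{crittwo} holds, there is a subformula $\psi$ and an $a$-edge $(\rho,\rho_\psi)$ with $\C,Y\nltdef\ltf{\psi}$ for \emph{every} facet $Y$ that is $a$-adjacent to $X$. By Lemma~\ref{EmbeddingAsDefinability}, $\C,Y\nfctdef\psi$ for every such $Y$. Here star-finiteness enters: there are only finitely many $a$-adjacent facets $Y_1,\dots,Y_n$ of $X$, and by the induction hypothesis each yields a formula $\for{Y_j}{\lt_\psi}$ with $\C,Y_j\fctsat\neg\for{Y_j}{\lt_\psi}$ and $\C',Z'\fctdef\psi \Imp \C',Z'\fctsat\for{Y_j}{\lt_\psi}$. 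Set $\theta \ce \bigwedge_{j=1}^n \for{Y_j}{\lt_\psi}$ (empty conjunction $\ce\top$ if $n=0$) and then $\for{X}{\Tree} \ce K_a\theta$, i.e.\ $\neg\M_a\neg\theta$. To verify~\eqref{eq:notembeddable}: $\C,X\fctsat\neg K_a\theta$ iff $\C,X\fctdef K_a\theta$ and $\C,X\nfctsat K_a\theta$; definedness of $\neg\M_a\neg\theta$ at $X$ needs $\M_a\neg\theta$ defined at $X$, which needs $a\in\chi(X)$ and that needs care — I would instead also conjoin the global atom $a$ if necessary, or observe that in the $n\ge 1$ case $a\in\chi(X)$ automatically. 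For~\eqref{eq:embeddable}: if $\C',X'\fctdef\varphi$ then by Lemma~\ref{Non.embedded.Tree} applied to $(\C',X')$ — its criterion~\eqref{crittwo} must fail along this $a$-edge — there is an $a$-adjacent $Z'$ with $\C',Z'\ltdef\ltf{\psi}$, i.e.\ $\C',Z'\fctdef\psi$, hence $\C',Z'\fctsat\for{Y_j}{\lt_\psi}$ for all $j$, so $\C',Z'\fctsat\theta$, whence $\C',X'\fctsat\M_a\theta$ and (using the defined-ness machinery, $a\in\chi'(X')$) we get $\C',X'\fctdef K_a\theta$. Getting the truth value of $K_a\theta$ right — not merely its definedness — at $X'$ is the delicate point.

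The main obstacle, as the last step signals, is the bookkeeping around the \emph{third value}: proving $\C,X\fctsat\neg\for{X}{\Tree}$ and $\C',X'\fctsat\for{X}{\Tree}$ requires tracking definedness separately from truth at every step, since $\C,X\nfctsat\chi$ could mean false or undefined, and only the former is useful. Concretely, in the $K_a$ case one must ensure $a\in\chi(X)$ (so that $\M_a\neg\theta$, hence $K_a\theta$, is defined and its falsity is genuine) and that $\theta$ itself is defined wherever needed; this is where the global atoms in $\glocalK$ and the ability to write $\top$, $\bot$, and bare $a$ are essential, and why this transformation could not be done purely syntactically (Prop.~\ref{rest:nodirecttrans}) — the choice between the ``\eqref{critone}-witness'' $a$ and the ``\eqref{crittwo}-witness'' $K_a\theta$, and the set $\{Y_1,\dots,Y_n\}$ of adjacent facets, both depend on $(\C,X)$. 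Once the correctness invariant is stated carefully (``$\for{X}{\Tree}$ is defined at $X$ and false there, and defined-and-true at every $(\C',X')$ defining $\varphi$''), the induction goes through along the lines above, with Lemma~\ref{Non.embedded.Tree} driving both the construction (at $X$) and the verification of~\eqref{eq:embeddable} (at $X'$).
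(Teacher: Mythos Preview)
Your recursive strategy guided by the two non-embeddability criteria of Lemma~\ref{Non.embedded.Tree} is exactly the paper's approach, and your treatment of criterion~\eqref{critone} via a single global atom~$a$ is correct (the paper takes $\bigwedge_{c\in\treeLab(\rho)}c$, but one witness suffices).

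The construction for criterion~\eqref{crittwo}, however, has a genuine gap. Taking $\for{X}{\Tree}=K_a\theta$ breaks~\eqref{eq:embeddable}, not merely as a ``delicate point'' but irreparably. With $\varphi=\M_a p_b$ and $(\C,X)$ whose every $a$-adjacent facet lacks~$b$, your recursion gives each $\forrk{Y_j}{p_b}=b$, hence $\theta=b$ and $\for{X}{\Tree}=K_a b$; but any $(\C',X')$ having one $a$-adjacent facet~$Z_1'$ with~$b\in\chi'(Z_1')$ and another~$Z_2'$ with~$b\notin\chi'(Z_2')$ satisfies $\C',X'\fctdef\M_a p_b$ yet $\C',X'\fctsat\M_a\neg b$, so $\C',X'\nfctsat K_a b$. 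The right outer modality is~$\M_a$: you already derived $\C',X'\fctsat\M_a\theta$, and that \emph{is} the conclusion you want for~\eqref{eq:embeddable}.

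Switching to $\M_a\theta$ still leaves~\eqref{eq:notembeddable} unsettled. You need $\C,X\fctsat\neg\M_a\theta$, which requires $\theta$ to be \emph{defined} (hence genuinely false, not merely not-true) at some~$Y_j$; your remark that ``definedness\dots needs $a\in\chi(X)$'' overlooks that it also needs $\theta$ defined at some $a$-adjacent facet. With the naive conjunction $\theta=\bigwedge_j\forrk{Y_j}{\psi}$, each conjunct $\forrk{Y_i}{\psi}$ built via criterion~\eqref{crittwo} is an $\M$-formula that may be undefined at~$Y_j$ for~$j\ne i$, so nothing prevents $\theta$ from being undefined at every~$Y_j$. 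The paper fixes this with a sequential construction: set $\xi_1\ce\forrk{Y_1}{\psi}$ and $\xi_{k+1}\ce\xi_k$ if $\C,Y_{k+1}\nfctsat\xi_k$, else $\xi_{k+1}\ce\xi_k\land\forrk{Y_{k+1}}{\psi}$. A new conjunct is added only when $\xi_k$ is \emph{true} at~$Y_{k+1}$, so the extended conjunction becomes \emph{false} there; this maintains the invariant that $\xi_k$~is false at some~$Y_{j_k}$ with~$j_k\le k$ and not true at any~$Y_l$ with~$l\le k$. Then $\for{X}{\Tree}\ce\M_a\xi_r$ works: $\xi_r$~is not true at any~$Y_j$ and false at~$Y_{j_r}$, giving~\eqref{eq:notembeddable}; and $\xi_r$~is a subconjunction of the~$\forrk{Y_j}{\psi}$'s, all of which are true at any~$Z'$ with~$\C',Z'\fctdef\psi$, giving~\eqref{eq:embeddable}.
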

\begin{proof} 
	Let~$\C=(C,\chi,\ell)$. 
	By~Lemma~\ref{EmbeddingAsDefinability}, $\C,X \nltdef \ltf{\varphi}$.
	We~construct $\for{X}{\Tree}$ by recursion on the~$\depth{(\ltf{\varphi})}$ of \rev{the} life tree $\ltf{\varphi}= \bigl((V,E),\treeLab\bigr)$ with root~$\rho \in V$. 
		
		If~$\treeLab(\rho) \nsubseteq \chi(X)$ meaning that~$a \notin \chi(X)$ for some~$a \in \treeLab(\rho)$, we define
			$\for{X}{\Tree} \ce \bigwedge_{c \in \treeLab(\rho) } c$.
		This~$\for{X}{\Tree}$ satisfies \eqref{eq:notembeddable} because $a$~is false in~$X$ due to~$a \notin \chi(X)$. 
		In~addition, \eqref{eq:embeddable}~holds because~$\C',X' \fctdef \varphi$ for~$\C' = (C',\chi',\ell')$ is equivalent to~$\C',X' \ltdef \ltf{\varphi}$ by Lemma~\ref{EmbeddingAsDefinability}, which implies that~$\treeLab(\rho) \subseteq \chi'(X')$ and, hence,~$\C',X' \fctsat \for{X}{\Tree}$. 
		This~case requires no recursive calls.
		
		If~$\treeLab(\rho) \subseteq \chi(X)$, then, by Lemma~\ref{Non.embedded.Tree}, criterion~\eqref{crittwo} must hold, i.e.,~there is a subformula~$\psi$ of~$\varphi$ and an $a$-edge~$(\rho,\rho_\psi) \in E$ to the root~$\rho_\psi$ of the \mbox{$a$-grafting}~$\ltf{\psi}^a$ where~$\ltf{\psi}=\bigl((V_\psi,E_\psi),\treeLab_\psi\bigr)$ such that~$\C, Y \nltdef \ltf{\psi}$ for any~$Y$ from~\mbox{$\sstar_a(X)\ce \{Y \in 
		\FF(\C) \mid a \in \chi(X \cap Y)\}$}\rev{, 
		which is finite by star-finiteness and~$X \in \sstar_a(X)$. 		
		Indeed, $a=\treeLab(\rho,\rho_\psi) \in\treeLab(\rho) \subseteq \chi(X)= \chi (X \cap X)$.
		Let~$\sstar_a(X) = \{Y_1,\dots,Y_r\}$.}
		By~Lemma~\ref{EmbeddingAsDefinability}, $\C, Y_i \nfctdef \psi$ for~$1 \leq i \leq r$, so we can recursively construct~$\forrk{Y_i}{\psi}$ such that
		\begin{gather}
			\label{eq:notembeddable_proof}
			\C,Y_i \fctsat \neg \forrk{Y_i}{\psi},
			\\
			\label{eq:embeddable_proof}
			\C',X' \fctdef \psi \quad\Longrightarrow\quad \C',X'  \fctsat \forrk{Y_i}{\psi} \quad\text{for all $(\C',X')$}. 
		\end{gather}
		for  each~$1 \leq i \leq r$  because~$\depth{(\ltf{\psi})} < \depth{(\ltf{\varphi})}$. 
		We define a sequence of formulas~$\xi_1,\dots,\xi_r$  as follows: 
		\begin{align*}
			\xi_1 &\ce \forrk{Y_1}{\psi}
			\\
			\xi_{k+1}&\ce 
			\begin{cases}
				\xi_{k} & \text{ if $\C, Y_{k+1} \nfctsat \xi_{k}$;}
				\\
				\xi_{k} \land \forrk{Y_{k+1}}{\psi} & \text{ otherwise, i.e., if $\C, Y_{k+1} \fctsat \xi_{k}$.}
			\end{cases}
		\end{align*}
		By~induction on~$1 \leq k \leq r$ we prove that 
		\begin{gather}
			\label{eq:sub_onefalse}
			\C, Y_{j_k} \fctsat \lnot \xi_k \text{ for some $1 \leq j_k \leq k$};
			\\
			\label{eq:sub_allnottrue}
			\C, Y_l \nfctsat \xi_k \text{ for any $1\leq l \leq k$}. 
		\end{gather}
		For~the \emph{base case~$k=1$} we must choose~$j_1\ce1$, which works  because~$\xi_1 =\forrk{Y_1}{\psi}$,  yielding both~$\C,Y_1 \fctsat \neg \forrk{Y_1}{\psi}$ by~\eqref{eq:notembeddable_proof} for~\eqref{eq:sub_onefalse}~and $\C,Y_1 \nfctsat  \forrk{Y_1}{\psi}$ for~\eqref{eq:sub_allnottrue}. 
		For~the \emph{induction step}, assume \eqref{eq:sub_onefalse}--\eqref{eq:sub_allnottrue}~hold for~$k$ and consider~$\xi_{k+1}$:
		\begin{compactitem}
			\item 
			If~$\xi_{k+1} =\xi_{k}$ because~$\C, Y_{k+1} \nfctsat \xi_{k}$, then we set~$j_{k+1} \ce j_{k}$. 
			By~the~IH\eqref{eq:sub_onefalse},  $\C, Y_{j_{k}} \fctsat \lnot \xi_{k}$, yielding~$\C, Y_{j_{k+1}} \fctsat \lnot \xi_{k+1}$, fulfilling~\eqref{eq:sub_onefalse}. 
			Further, by the~IH\eqref{eq:sub_allnottrue}, $\C, Y_l \nfctsat \xi_{k}$ for any~$1 \leq l \leq k$~and $\C, Y_{k+1} \nfctsat \xi_{k}$ by construction. 
			Hence, $\C, Y_l \nfctsat \xi_{k+1}$ for any~$1 \leq l \leq k+1$ fulfilling~\eqref{eq:sub_allnottrue}.
			\item 
			If,~on the other hand, $\xi_{k+1}=\xi_{k} \land \forrk{Y_{k+1}}{\psi}$ because~$\C, Y_{k+1} \fctsat \xi_{k}$, then we set~$j_{k+1} \ce k+1$.
			By~\eqref{eq:notembeddable_proof},
			$\C, Y_{k+1} \fctsat  \lnot \forrk{Y_{k+1}}{\psi}$, so~$\C, Y_{k+1} \fctsat \lnot (\xi_{k} \land \forrk{Y_{k+1}}{\psi})$
			fulfilling~\eqref{eq:sub_onefalse}. 
			Further, by the~IH\eqref{eq:sub_allnottrue}, $\C, Y_l \nfctsat \xi_{k}$ for any~$1 \leq l \leq k$ and, as already discussed, $\C, Y_{k+1} \nfctsat   \forrk{Y_{k+1}}{\psi}$. 
			Hence, $\C, Y_l \nfctsat \xi_{k} \land \forrk{Y_{k+1}}{\psi}$ for any~$1 \leq l \leq k+1$ fulfilling~\eqref{eq:sub_allnottrue}. 
		\end{compactitem}
		This~completes the induction proof of~\eqref{eq:sub_onefalse}--\eqref{eq:sub_allnottrue}. 
		In~particular, for~$k=r$, we have  that $\xi_r$~is not true in any facet $a$-adjacent to~$X$ and is false, hence, defined in at least one such facet. 
		Thus, $\C, X \fctsat \lnot \M_a \xi_r$,~and  \eqref{eq:notembeddable}~holds for~$\for{X}{\Tree} \ce \M_a \xi_r$.

		It remains to show that \eqref{eq:embeddable}~also holds. 
		Consider any~$(\C',X')$ with~\mbox{$\C'=( C',\chi',\ell')$} such that~$\C',X' \fctdef \varphi$. 
		By~Lemma~\ref{EmbeddingAsDefinability},  $\C',X' \ltdef_{e'} \ltf{\varphi}$ for some~$e'\colon V \to \FF(C')$ with~$e'(\rho)= X'$. Let $e'(\rho_\psi)=Z'$. 
		Since $(\rho,\rho_\psi) \in E$ is an $a$-edge, $a \in \chi'(X' \cap Z')$ by Def.~\ref{def:embedding}\eqref{emb:edge}. 
		It~is easy to see that~$\C', Z' \ltdef_{e'\upharpoonright V_{\psi}} \ltf{\psi}$, thus,  $\C', Z' \fctdef \psi$ by Lemma~\ref{EmbeddingAsDefinability}. 
		By~\eqref{eq:embeddable_proof}, $\C',Z' \fctsat \forrk{Y_i}{\psi}$ for all~$1\leq i \leq r$. 
		Given that $\xi_r$~is the conjunction  of some of~$\forrk{Y_i}{\psi}$'s, we conclude that~$\C',Z' \fctsat \xi_r$.  
		Consequently, $\C', X' \fctsat \M_a \xi_r$, i.e.,~$\C', X' \fctsat \for{X}{\Tree}$, which completes the proof of~\eqref{eq:embeddable}.
\end{proof}

	\begin{figure}[t]
	\centering
\scalebox{.8}{
		\begin{tabular}{c@{\qquad\qquad}cc}
			\begin{tikzpicture}
				\fill[fill=gray!25!white] (0,0) -- (-1,-1.5)  -- (1,-1.5) -- cycle;
				
				\node[a dot] (a0) at (0,0) {$1_a$};
				\node[a dot] (a3) at (-2,-3) {$0_a$};
				\node[b dot] (b3) at (-1,-1.5) {$0_b$};
				\node[c dot] (c3) at (1,-1.5) {$1_c$};
				\node[d dot] (d4) at (2,-3) {$0_d$};
				\node[e dot] (e2) at (0,-3) {$1_e$};
				\node(c) at (2,-2.3) {$Y_2$};
				\node(c) at (0, -2.3) {$Y_3$};
				\node(c) at (-1.1, -2.7) {$Y_4$};
				\node(c) at (0, -1) {$X$};
				
				\draw[-] (a0) -- (b3)--(c3)--(a0);
				\draw[-] (c3) --(e2)--(a3);
				\draw[-] (c3) -- (d4);
			\end{tikzpicture}
			&
			\begin{tikzpicture}
				\node (nothing) at (1,4.5) { };
				\node (L1r) at (1,1.5) {};
				\node (L2m) at (1,3) {$\{d \}$};
				\node (L3m) at (,4.5) {$\{a,d \}$};

				\draw (L2m)--(L3m)  node at (.7,3.7) {$d$};
			\end{tikzpicture}
			&
			\begin{tikzpicture}
				\node (nothing) at (-2,4.5) { };
				\node (L1r) at (1,1.5) {$\{b,c\}$};
				\node (L2m) at (-.5,3) {$\{c,d \}$};
				\node (L2r) at (2.5,3) {$\{c,e\}$};
				\node (L3m) at (-.5,4.5) {$\{a,d \}$};
				\node (L3r) at (2.5,4.5) {$\{a,e\}$};

				\draw (L1r)--(L2m) node at (-0.2,2.2) {$c$};
				\draw (L1r)--(L2r) node at  (2.2,2.2) {$c$};
				\draw (L2m)--(L3m)  node at (-.8,3.7) {$d$};
				\draw (L2r)--(L3r) node at (2.3,3.7) {$e$};
			\end{tikzpicture}
		\end{tabular}
}
		\caption{From right to left: life tree~$\ltf{\varphi}$ of formula~$\varphi =  p_b \land \M_c \M_d p_a \land \M_c \M_e \neg p_a$; life tree~$\ltf{\psi}$ of its subformula~$\psi =\M_d p_a$; and
		simplicial model~$(\C,X)$ such that~$\C, X \nfctdef \varphi$.}
		\label{fig.rojo2}
	\end{figure}

\begin{example} 
\label{example.simplified}
	\rev{Consider the pointed simplicial model}~$(\C,X)$ in Fig.~\ref{fig.rojo2}~(left). 	
	Let~$\varphi =  p_b\land \M_c \M_d p_a \land \M_c \M_e \neg p_a$. 
	Its~life tree~$\ltf{\varphi}=\bigl((V,E),\treeLab\bigr)$ with root~$\rho$ is depicted in Fig.~\ref{fig.rojo2}~(right). 
	We~will show that~$\C,X \nfctdef \varphi$ by using Lemmas~\ref{EmbeddingAsDefinability}~and~\ref{Non.embedded.Tree}, while simultaneously  constructing~$\forrk{X}{\varphi}$ in two different ways.
	All agents from~$\treeLab(\rho)=\{b,c\}$ of its root~$\rho$ are present in~$\chi(X) = \{a,b,c\}$. 
	Thus, criterion~\eqref{critone} of Lemma~\ref{Non.embedded.Tree} fails. 
	But~consider the left subtree~$\ltf{\psi}^c$ of~$\ltf{\varphi}$, which corresponds to  subformula~$\psi = \M_d p_a$ of~$\varphi$. 
	Life tree~$\ltf{\psi}$, see Fig.~\ref{fig.rojo2}~(middle), cannot be embedded into any of the facets $c$-adjacent to~$X$: indeed,  label~$\{d\}$ of its root~$\rho_\psi$ is  included in neither~$\chi(X)= \{a,b,c\}$~nor $\chi(Y_3)=\{c,e\}$, resulting in~$\forrk{X}{\psi} =\forrk{Y_3}{\psi} = d$, according to the construction in the proof of Lemma~\ref{formula.for.Non.embedded.LT}. 
	For~the remaining, third $c$-adjacent facet~$Y_2$ with~$\chi(Y_2)= \{c,d\}$,  the label~$\{d\}$ of~$\rho_\psi$ is included, however, $\ltf{\psi}$~has a $d$-edge from its root to a node labeled~$\{a,d\}$, which corresponds to subformula~$p_a$, and $a$~is absent in the only facet $d$-adjacent to~$Y_2$, which is $Y_2$~itself. 
	Therefore, the same construction yields~$\forrk{Y_2}{\psi} = \M_d a$. 
	
	How~$\forrk{X}{\varphi}$~is constructed from~$\forrk{X}{\psi}$,  $\forrk{Y_2}{\psi}$,~and~$\forrk{Y_3}{\psi}$ depends on the chosen ordering among these three $c$-adjacent facets: 

\noindent	
\strut\hfill
	\begin{tabular}{l|l}
	 \multicolumn{1}{c|}{Ordering $Y_2, X, Y_3$} & \multicolumn{1}{c}{Ordering $X, Y_2, Y_3$}
	\\
	\hline
	$\xi^{\mathstrut}_1 = \forrk{Y_2}{\psi} = \M_d a$ & $\xi'_1 = \forrk{X}{\psi}= d$
	\\
	$\xi_2 = \xi_1 = \M_d a$\quad as\quad$\C,X \nfctsat \M_d a$ & 
	$\xi'_2 = \xi'_1 \land  \forrk{Y_2}{\psi} = d \land \M_d a$\quad as\quad$\C,Y_2 \fctsat d$
	\\
	$\xi_3 = \xi_2 = \M_d a$\quad  as\quad$\C,Y_3 \nfctsat \M_d a$ & $\xi'_3 = \xi'_2 = d \land \M_d a$\quad as\quad$\C,Y_3 \nfctsat d \land \M_d a$
	\end{tabular}
	\hfill\strut
Accordingly, the two orderings provide two alternative formulas~$\M_c\M_d a$~or $\M_c(d \land \M_da)$ for the role of~$\forrk{X}{\varphi}$. 
Either of them satisfies~\eqref{eq:notembeddable}--\eqref{eq:embeddable}.
\end{example}

\begin{theorem}[Modal equivalence implies bisimilarity] 
\label{hehelastone}
For arbitrary star-finite  pointed simplicial models\/~$(\C,X)$~and\/ $(\C',X')$,\\
\strut\hfill$(\C,X) \equiv_{\glocalK} (\C',X') \quad \Longrightarrow\quad(\C,X) \bisim (\C',X')$.\hfill\strut
\end{theorem}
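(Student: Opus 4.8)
The plan is to run the classical Hennessy–Milner argument with the modal-equivalence relation itself as the candidate bisimulation, outsourcing the single step where the three-valued setting departs from the two-valued one to Lemma~\ref{formula.for.Non.embedded.LT}. Concretely, I would set $\B \ce \{(Y,Y') \mid (\C,Y) \equiv_{\glocalK} (\C',Y')\} \subseteq \FF(C) \times \FF(C')$; it is non-empty since $X \B X'$ by hypothesis, so it remains to verify the three clauses of Def.~\ref{def:bis0}. \textbf{Atoms} is immediate: if $Y \B Y'$, then truth-equivalence on the global atoms~$a$ gives $\chi(Y) = \chi'(Y')$, and truth-equivalence on the local atoms~$p_a$ gives $\ell(Y) = \ell'(Y')$.

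For \textbf{Forth}, I would fix $Y \B Y'$, an agent~$a$, and $Z \in \FF(C)$ with $a \in \chi(Y \cap Z)$, and argue by contradiction that some facet $a$-adjacent to~$Y'$ is modally equivalent to~$Z$. Since $a \in \chi(Y) = \chi'(Y')$ by \textbf{Atoms}, the facet~$Y'$ is $a$-adjacent to itself, so by star-finiteness $\sstar_a(Y') = \{Z'_1,\dots,Z'_k\}$ with $k \geq 1$. If no~$Z'_i$ were modally equivalent to~$Z$, then for each~$i$, by Lemma~\ref{simplifyModalEquivalence}, some formula~$\psi$ witnesses the failure of~\eqref{eq:modeq_true} between $(\C,Z)$ and $(\C',Z'_i)$; the key claim is that from it one can always extract a formula~$\theta_i \in \glocalK$ that is \emph{true} in~$Z$ but \emph{not true} in~$Z'_i$. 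There are three cases according to the status of~$\psi$ in~$Z$: if $\C,Z \fctsat \psi$ (and then $\C',Z'_i \nfctsat \psi$), take $\theta_i \ce \psi$; if $\C,Z \nfctsat \psi$ but $\C,Z \fctdef \psi$, i.e.,~$\psi$ is false in~$Z$ while $\C',Z'_i \fctsat \psi$, take $\theta_i \ce \neg\psi$; and if $\C,Z \nfctdef \psi$ (so necessarily $\C',Z'_i \fctsat \psi$), apply Lemma~\ref{formula.for.Non.embedded.LT} to~$\psi$ and the star-finite model $(\C,Z)$ to obtain~$\forrk{Z}{\psi}$ with $\C,Z \fctsat \neg\forrk{Z}{\psi}$, and, since $\C',Z'_i \fctdef \psi$, also $\C',Z'_i \fctsat \forrk{Z}{\psi}$, so take $\theta_i \ce \neg\forrk{Z}{\psi}$. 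In every case $\theta_i$ is true (hence defined) in~$Z$ and not true in~$Z'_i$, so $\theta \ce \theta_1 \land \dots \land \theta_k$ is true in~$Z$; then $\C,Y \fctsat \M_a\theta$ via the witness~$Z$, whereas $\C',Y' \nfctsat \M_a\theta$ because every facet $a$-adjacent to~$Y'$ is some~$Z'_i$ and fails to make~$\theta_i$, hence~$\theta$, true---contradicting $Y \B Y'$. The \textbf{Back} clause is entirely symmetric, this time invoking Lemma~\ref{formula.for.Non.embedded.LT} on the star-finite model $(\C',Z')$.

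I expect the main obstacle to be the undefined case above, i.e.\ when the distinguishing formula~$\psi$ is undefined in~$Z$: this is exactly where the usual two-valued manoeuvre (``replace $\psi$ by $\neg\psi$ if needed'') collapses, since $\neg\psi$ is then still undefined in~$Z$ and distinguishes nothing. The entire life-tree development---Lemmas~\ref{EmbeddingAsDefinability}, \ref{Non.embedded.Tree}, and especially~\ref{formula.for.Non.embedded.LT}---exists precisely to supply a genuine truth in~$Z$ that still fails at~$Z'_i$; so in this theorem the remaining work is just to split the cases correctly, check definedness and truth of each~$\theta_i$, and assemble the finite conjunction under~$\M_a$. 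Together with Theorem~\ref{BisimImpliesEquiv}, this yields the full Hennessy--Milner property for~$\glocalK$ on star-finite simplicial models.
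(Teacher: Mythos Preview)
Your proposal is correct and follows essentially the same route as the paper's own proof: both take $\B$ to be the modal-equivalence relation, verify \textbf{Atoms} directly, and for \textbf{Forth}/\textbf{Back} derive a contradiction by building $\M_a\bigl(\bigwedge_i\theta_i\bigr)$ from per-facet distinguishing formulas, with Lemma~\ref{formula.for.Non.embedded.LT} handling exactly the undefined case you single out. The only cosmetic differences are your swapped variable names ($Y\B Y'$, $Z$ adjacent, versus the paper's $Z\B Z'$, $Y$ adjacent), your explicit invocation of Lemma~\ref{simplifyModalEquivalence} to reduce to a failure of~\eqref{eq:modeq_true} before the case split, and your explicit remark that $k\geq 1$ because $a\in\chi'(Y')$ via \textbf{Atoms}---a point the paper leaves implicit.
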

\begin{proof}
	Let~$(\C,X)\equiv_{\glocalK} (\C',X')$ for~$\C = (C, \chi, \ell)$~and $\C' = (C', \chi', \ell')$.
	We~define a binary relation~$\B\subseteq \FF(C) \times \FF(C')$ so that~$Y\B Y'$ if{f} $(\C,Y) \equiv_{\glocalK} (\C',Y')$. 
	In~particular,~$X \B X'$. 
	Towards a contradiction, assume that  $\B$~is not a bisimulation. 
	We~can prove without difficulty that \textbf{atoms} is satisfied. 
	Thus, either \textbf{forth} or \textbf{back} fails. 
	The~two cases  are symmetric, and we consider only the former.   
	Assume  that~$Z \B Z'$, but for some~$a \in A$~and some~$ Y \in \FF(C)$ with~$a \in \chi(Z \cap Y)$, there is no~$Y' \in \FF(C')$ such that~$a \in \chi'(Z' \cap Y')$~and~$Y \B Y'$. 
	Let~$\sstar_a(Z') = \{Y'_1,\dots,Y'_n\}$ in $\C'$
	(this~set is finite because $\C'$~is star-finite). 
	By~construction of~$\B$,  we have~$(\C,Y) \not\equiv_{\glocalK} (\C',Y'_i)$ for any~$1 \leq i\leq n$. 
	In~other words, there exist formulas~$\varphi_1,\dots,\varphi_n$ such that for each~$1 \leq i \leq n$ one of the following three statements holds:
	\begin{align}
		\label{eq:truenottrue}
		\C, Y &\fctsat \varphi_i  &\text{but }\qquad \C', Y'_i &\nfctsat \varphi_i,
		\\
		\label{eq:falsenotfalse}
		\C, Y &\fctsat \lnot\varphi_i  &\text{but }\qquad \C', Y'_i &\nfctsat \lnot\varphi_i,
		\\
		\label{eq:defnotdef}
		\C, Y &\nfctdef \varphi_i \qquad &\text{but }\qquad \C', Y'_i &\fctdef \varphi_i.
	\end{align}
	We transform these distinguishing formulas into~$\psi_1,\dots,\psi_n$ as follows:
	\[
	\psi_i \ce 
	\begin{cases}
		\varphi_i & \text{if \eqref{eq:truenottrue} holds for $\varphi_i$,}
		\\
		\lnot \varphi_i & \text{if \eqref{eq:falsenotfalse} holds for $\varphi_i$,}
		\\
		\lnot \forrk{Y}{{(\varphi_i)}} & \text{if \eqref{eq:defnotdef} holds for $\varphi_i$.}
	\end{cases}
	\]
	Note~that now $\C, Y \fctsat \psi_i$ and $\C', Y'_i \nfctsat \psi_i$ for all~$1 \leq i \leq n$. 
	Indeed, for the first two clauses, it follows directly from~\eqref{eq:truenottrue}--\eqref{eq:falsenotfalse}, while for the last clause this is a consequence of Lemma~\ref{formula.for.Non.embedded.LT}.
	In~other words, we have replaced \rev{the} initial distinguishing formulas~$\varphi_i$ with  \rev{the} distinguishing formulas~$\psi_i$ that are all true in~$(\C,Y)$. 
	We~will show that~$
	\M_a \Et_{i=1}^n \psi_i$ distinguishes~$Z$ from~$Z'$.
	This contradicts our assumption that~$(\C,Z)\equiv_{\glocalK}(\C',Z')$ and, thus, proves that $\B$~is a bisimulation.
	
	Since~$\C,Y\fctsat\psi_i$ for all~$1 \leq i \leq n$, also~$\C,Y\fctsat\Et_{i=1}^n\psi_i$. 
	Given~$a \in \chi(Z \inter Y)$, we obtain~$\C,Z\fctsat \M_a\Et_{i=1}^n\psi_i$. 
	Since~$\C',Y'_i\nfctsat\psi_i$, it follows that~$\C',Y'_i\nfctsat\Et_{i=1}^n\psi_i$ for any~$1 \leq i \leq n$. 
	Thus, $\C',Z'\nfctsat \M_a\Et_{i=1}^n\psi_i$.
\end{proof}

\begin{theorem}[Hennessy--Milner property]
	\label{Hennessy--Milner}
For arbitrary star-finite  pointed simplicial models\/~$(\C,X)$~and\/ $(\C',X')$,\\
\[
(\C,X)\equiv_{\glocalK}(\C',X')
\quad\Longleftrightarrow\quad
(\C,X)\bisim(\C',X').
\] 
\end{theorem}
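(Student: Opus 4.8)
The plan is to obtain this biconditional as an immediate corollary of the two one-directional results already established in this section. The right-to-left implication, that $(\C,X)\bisim(\C',X')$ entails $(\C,X)\equiv_{\glocalK}(\C',X')$, is exactly Theorem~\ref{BisimImpliesEquiv}; crucially, that direction was proved with no finiteness hypothesis whatsoever, so it applies in particular to star-finite pointed simplicial models. The left-to-right implication, that modal equivalence entails bisimilarity, is exactly Theorem~\ref{hehelastone}, whose proof does use star-finiteness — both directly, to guarantee that $\sstar_a(Z')$ is finite so that the distinguishing formula $\M_a\bigwedge_{i=1}^n\psi_i$ is a well-formed finite conjunction, and indirectly, via the appeal to Lemma~\ref{formula.for.Non.embedded.LT}, which is itself stated for star-finite models. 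Chaining the two implications under the shared star-finiteness assumption gives the claimed equivalence.

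Since both halves are already in hand, there is essentially no obstacle left: the only thing to verify is that the hypotheses and the language match up, and they do — $\equiv$ throughout this section abbreviates $\equiv_{\glocalK}$, which is precisely the notion appearing in both cited theorems, and star-finiteness is exactly the side condition under which Theorem~\ref{hehelastone} was proved, while Theorem~\ref{BisimImpliesEquiv} imposes nothing beyond what is assumed here. I would therefore write the proof as a one-line appeal: the forward direction is Theorem~\ref{hehelastone} and the backward direction is Theorem~\ref{BisimImpliesEquiv}, completing the Hennessy--Milner characterization for language~$\glocalK$ on star-finite simplicial models.
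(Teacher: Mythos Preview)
Your proposal is correct and matches the paper's own proof, which is likewise a one-line appeal to Theorems~\ref{BisimImpliesEquiv} and~\ref{hehelastone}. Your additional remarks about hypothesis-matching (star-finiteness being needed only for the forward direction, and $\equiv$ abbreviating $\equiv_{\glocalK}$) are accurate and simply make explicit what the paper leaves implicit.
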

\begin{proof}
	This follows from Theorems~\ref{BisimImpliesEquiv}~and~\ref{hehelastone}.
\end{proof}

\section{Life Bisimulation for Partial Epistemic Models} \label{sec.kripke}

\rev{The~main purpose of this section is to provide a bridge to a more familiar formalism for those who are not yet fluent in simplicial semantics. 
In~particular,  we compare simplicial models to Kripke models, including our results for bisimulation. 
Accordingly, we omit proofs in this section because they are obtained via the categorical equivalence with simplicial models.} 
For~the model correspondence in  language~$\localK$ we recall~\cite{vDitKuz22arXiv}, which only  requires a minor modification to be extended to  language~$\glocalK$. 
We~continue by defining life bisimulation and will contrast it with standard  bisimulation. 
Note that languages~$\glocalK$~and~$\localK$ remain the same throughout, we merely interpret them on Kripke models now.

\begin{definition}
A~binary relation~$\sim$ on \rev{a set}~$S$ is called a \emph{partial equivalence relation}~if{f} \rev{$\sim$~is an equivalence relation on some subset~$S' \subseteq S$.}   
\end{definition}
In~particular, partial equivalence relations are transitive and symmetric and induce a partition of~$S'$. 
Note also that $S'$~is uniquely determined by~$\sim$: $S' = \{s \in S \mid s \sim s\}$.

\begin{definition}
Consider a pair~$(S,\sim)$ where $S$~is the \emph{domain} of (\emph{global})~\emph{states}~and $\sim \colon A \to 2^{S \times S}$ maps each agent~$a \in A$ to a partial equivalence relation~$\sim_a$ on~$S$. 
Let~$S_a$~be the  subset of~$S$ such that $\sim_a$~is an equivalence relation on~$S_a$, i.e.,~the set of states where $a$~is alive. 
Let~$[s]_a \ce \{ t \in S \mid s \sim_a t \}$ denote  equivalence classes of~$\sim_a$ on~$S_a$. 
$(S,\sim)$~is \emph{image-finite} if{f} $[s]_a$~is finite for all~$a \in A$~and $s\in S_a$. 
Given~$s \in S$, \rev{the} set~$A_s \ce \{a \in A \mid s \in S_a\}$ contains the agents that are alive in state~$s$.
Relation~$\sim_a$ is \emph{proper} if{f} for all distinct~$s,t \in S$ there is a\rev{n agent}~$b \in A_s$ such that~$s \not\sim_b t$. 

For~such a pair~$(S,\sim)$, a triple~$\model = (S,\sim,L)$ is  a \emph{partial epistemic model}  if{f}  all~$\sim_a$ are proper and a \emph{valuation function}  $L \colon S \to 2^{P}$ satisfies: for all~$a \in A$, $p_a \in P_a$, $s,t \in S_a$,~and $u \in S$,
\begin{compactitem}
\item if~$s \sim_a t$\rev{,} then~$p_a \in L(s)$ if{f} $p_a \in L(t)$, and 
\item $a \in L(u)$ if{f} $a \in A_u$. 
\end{compactitem}
$(\model,s)$~for~$s \in S$ is a \emph{pointed partial epistemic model} (we often omit~`pointed').  
\end{definition}

As in Def.~\ref{lanKa.fct.defsat}, to interpret $\varphi\in \glocalK$ in a global state~$s$ of a partial epistemic model~$\model$,  by induction on the structure of~$\varphi$,   we define~$\fctdef$ to determine whether $\varphi$~is defined~and $\fctsat$~to determine its truth value when defined. 
\begin{definition} 
Given~a partial epistemic model~$\model = (S,\sim,L)$,  for all~$s \in S$ we define~$\fctdef$~and~$\fctsat$ by induction on~$\varphi\in\glocalK$:\\
$\begin{array}{lcl}
\model,s \fctdef a & &\text{always}; 
\\
\model,s \fctdef p_a & \text{if{f}} & s \in S_a; \\
\model,s \fctdef \neg\varphi & \text{if{f}} & \model,s \fctdef \varphi; 
\\
\model,s \fctdef \varphi\et\psi & \text{if{f}} & \model,s \fctdef \varphi \text{ and } \model,s \fctdef \psi; \\
\model,s \fctdef \M_a \varphi & \text{if{f}} & \model,t \fctdef \varphi \text{ for some } t \in S \text{ such that } t \sim_a s. \\[.5ex]
\model,s \fctsat a & \text{if{f}} & s \in S_a; \\
\model,s \fctsat p_a & \text{if{f}} & s \in S_a \text{ and }  p_a \in L(s);\\
\model,s \fctsat \neg\varphi & \text{if{f}} & \model,s \fctdef \varphi \text{ and } \model,s \nfctsat \varphi;\\
\model,s \fctsat \varphi\et\psi & \text{if{f}} & \model,s \fctsat \varphi \text{ and } \model,s \fctsat \psi;\\
\model,s \fctsat \M_a \varphi & \text{if{f}} & \model,t \fctsat \varphi \text{ for some } t \in S \text{ such that } t \sim_a s.
\end{array}$\\
Formula~$\varphi$ is \emph{valid} if{f}  $\model,s \fctdef \varphi$ implies $\model,s \fctsat \varphi$ for all~$(\model,s)$. 
The~\emph{denotation} of~$\varphi$ in~$\model$ is defined as \rev{the} set~$\I{\varphi}_\model\ce\{ s \in S \mid \model,s \fctsat \varphi \}$.

Partial epistemic models~$(\model,s)$~and $(\model',s')$ are \emph{modally equivalent}, 
denoted~$(\model,s)\equiv^\loc(\model',s')$, if{f} for 
all~$\varphi\in\glocalK$, 
$\model,s \fctdef\varphi \Longleftrightarrow\model',s' \fctdef \varphi$,  
$\model,s \fctsat \varphi\Longleftrightarrow\model',s' \fctsat \varphi$,~and 
$\model,s \fctsat\neg \varphi\Longleftrightarrow\model',s' \fctsat \neg \varphi$.
\end{definition}

A~consequence of our semantics is that local \rev{atoms}~$p_a$ for  agent~$a$ may be assigned to states~$s$ outside of~$S_a$, i.e.,~where $a$~is dead. 
Such~\rev{atoms}~$p_a$ are undefined in~$s$ (and~\rev{atom}~$a$ is~false)  whether $p_a \in L(s)$~or $p_a \notin L(s)$. 
Partial epistemic models, therefore, contain superfluous information. An alternative knowledge representation would make them truly partial, albeit at the expense of comparing life bisimulation with standard bisimulation.

We~now recall the correspondence between impure simplicial models and partial epistemic models from~\cite{vDitKuz22arXiv}, which  generalizes \rev{the equivalence of categories from~\cite{GoubaultLR21} between pure simplicial models and local proper Kripke models where all relations are equivalence relations.}

\begin{definition}
Operation~$\sigma$ (for~\emph{S}implicial) maps each partial epistemic model~$\model= (S,\sim,L)$ to a simplicial model~$\sigma(\model)= (C,\chi,\ell)$ as follows:
\begin{compactitem}
	\item 
	vertices  are  pairs $\bigl([s]_a,a\bigr)$ for all~$s \in S$~and    $a \in A_s$;
	\item 
	$C$~consists of simplexes~$ \bigl\{ \bigl([s]_a,a\bigr) \,\big|\, a \in B \bigr\}$ for all~$s \in S$~and   $\emptyset\neq B \subseteq A_s$; 
	\item 
$\chi\Bigl(\bigl([s]_a,a\bigr)\Bigr) \ce a$ for each vertex~$\bigl([s]_a,a\bigr)$;
	\item  
$\ell\Bigl(\bigl([s]_a,a\bigr)\Bigr) \ce  P_a \cap L(s)$ for each vertex~$\bigl([s]_a,a\bigr)$.
\end{compactitem}
We~let $\sigma(s)$~denote  \rev{the} facet $\left\{ \bigl([s]_a,a\bigr) \mid a \in A_s\right\}$. 

Operation~$\kappa$ (for~\emph{K}ripke)  maps each  simplicial model~$\C= (C,\chi,\ell)$ to a partial epistemic model~$\kappa(\C) = (S,\sim,L)$ as follows:
 \begin{compactitem}
	\item 
	$S \ce \FF(C)$ consists of  facets~$X \in \FF(C)$;
	\item 
$X \sim_a Y $ if{f} $a \in \chi(X \cap Y)$ for any agent~$a$ and global states~$X$~and~$Y$;
	\item 
$L(X) \ce \ell(X) \cup  \chi(X)$ for any global state~$X$. 
\end{compactitem} 

As~$\sigma$~maps each state~$s$ in~$\model$ to a facet~$\sigma(s)$ in~$\sigma(\model)$~and $\kappa$~maps each facet~$X$ in~$\C$ to a state~$X$ in~$\kappa(\C)$, these maps are also between structures~$(\model,s)$ respectively~$(\C,X)$: 
we let $\sigma(\model,s) \ce \bigl(\sigma(\model),\sigma(s)\bigr)$~and $\kappa(\C,X) \ce \bigl(\kappa(\C), X\bigr)$.
\end{definition}
We recall from~\cite{vDitKuz22arXiv} that for all~$\varphi\in\localK$, 
$\model,s\fctdef\varphi \Longleftrightarrow\sigma(\model,s) \fctdef\varphi$~and  
$\C,X\fctdef\varphi\Longleftrightarrow\kappa(\C,X)\fctdef \varphi$.
It is straightforward to extend this to~$\glocalK$.
\begin{proposition}
\label{prop.corr2} 
Let~$\varphi \in \glocalK$.
\begin{compactitem}
\item For~all pointed partial epistemic models\/~$(\model,s)$: \\ 
$\model,s\fctdef \varphi \Longleftrightarrow\sigma(\model,s) \fctdef \varphi$\quad and\quad$\model,s\fctsat \varphi \Longleftrightarrow\sigma(\model,s) \fctsat \varphi$.
\item 
For~all pointed simplicial models\/~$(\C,X)$: \\ 
$\C,X \fctdef\varphi \Longleftrightarrow\kappa(\C,X) \fctdef \varphi$\quad and\quad$\C,X \fctsat \varphi \Longleftrightarrow\kappa(\C,X) \fctsat \varphi$.
\end{compactitem}
\end{proposition}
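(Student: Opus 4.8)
The plan is to prove all four equivalences by a single induction on the construction of $\varphi \in \glocalK$, the point being that the only case not already subsumed by the $\localK$ correspondence recalled just above is that of a global atom $\varphi = a$.

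For that case, definability is trivial: $\model, s \fctdef a$, $\sigma(\model), \sigma(s) \fctdef a$, $\C, X \fctdef a$, and $\kappa(\C), X \fctdef a$ all hold unconditionally, so both definability equivalences are immediate. For truth I would just unfold the definitions. On the simplicial image of a Kripke model, $\sigma(s) = \{([s]_b, b) \mid b \in A_s\}$, so $\chi\bigl(\sigma(s)\bigr) = A_s$; hence $\sigma(\model), \sigma(s) \fctsat a$ iff $a \in \chi\bigl(\sigma(s)\bigr)$ iff $a \in A_s$ iff $s \in S_a$ iff $\model, s \fctsat a$. Symmetrically, in $\kappa(\C) = (S, \sim, L)$ the set of states where $a$ is alive is $S_a = \{X \in \FF(C) \mid X \sim_a X\} = \{X \mid a \in \chi(X \cap X)\} = \{X \mid a \in \chi(X)\}$, so $\kappa(\C), X \fctsat a$ iff $X \in S_a$ iff $a \in \chi(X)$ iff $\C, X \fctsat a$. (If it has not already been recorded when extending $\sigma$ and $\kappa$ to $\glocalK$, this same computation gives the one-line check that $\kappa(\C)$ is still a partial epistemic model: $a \in L(X) = \ell(X) \cup \chi(X)$ iff $a \in \chi(X)$ iff $X \in S_a$, since $\ell(X) \subseteq P \setminus A$.)

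All remaining cases — the local atom $p_a$, negation, conjunction, and the modalities $\M_a$ — go through by exactly the argument used for the $\localK$ correspondence in \cite{vDitKuz22arXiv}: none of those semantic clauses is touched by adding global atoms to the language, and the induction hypothesis, now taken over all of $\glocalK$, still applies to subformulas that may contain global atoms. Since the genuinely new content is confined to this one essentially definitional case, I anticipate no real obstacle; the only thing to be careful with is the bookkeeping identity $\chi\bigl(\sigma(s)\bigr) = A_s$ (equivalently $\chi(X) = A_X$ inside $\kappa(\C)$), which is what makes the global-atom clause line up on the two sides.
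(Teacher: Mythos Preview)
Your proposal is correct and matches the paper's approach: the paper itself omits the proof, merely remarking that the $\localK$ correspondence from~\cite{vDitKuz22arXiv} is ``straightforward to extend'' to~$\glocalK$, and your argument supplies exactly that straightforward extension by handling the one new base case (global atoms) via the identity $\chi\bigl(\sigma(s)\bigr) = A_s$ and its $\kappa$-side analogue.
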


Figure~\ref{fig.motivating} contains examples of corresponding simplicial models and partial epistemic models. 
Similarly,  simplicial models from Example~\ref{example.bisim} correspond to partial epistemic models  from Example~\ref{example.bisim2} below.

We~now define \emph{life bisimulation} for partial epistemic models, show how it corresponds to impure simplicial models, and how it is different from the standard notion of bisimulation for Kripke models.

\begin{definition}
A~\emph{life bisimulation} between partial epistemic models~\mbox{$\model = (S,\sim,L)$}~and $\model' = (S',\sim',L')$, notation~$\Z: \model\bisim \model'$,~or \mbox{$\Z: (\model,s)\bisim (\model',s')$} given~$s\Z s'$, is a non-empty binary relation~$\Z \subseteq S \times S'$ such that for all~$s \in S$~and $s' \in S'$ with~$s\Z s'$ the following three conditions are satisfied:
\begin{compactitem}
\item \textbf{Atoms}: 
$L(s) \cap A = L'(s') \cap A$~and, additionally, $L(s) \cap P_a = L'(s') \cap P_a$ for each~$a \in A_s$ (note~that here $A_s = L(s) \cap A = L'(s') \cap A = A_{s'}$).
\item \textbf{Forth}: 
for all~$a \in A_s$, for all~$t \sim_a s$, there is a~$t'\sim'_a s'$ such that~$t\Z t'$.
\item \textbf{Back}: 
for all~$a \in A_{s'}$, for all~$t'\sim'_as'$, there is a~$t \sim_a s$ such that~$t\Z t'$.
\end{compactitem}
Life~bisimulation~$\Z$ is \emph{total} \rev{if{f}} the domain and codomain of~$\Z$ are~$S$ \rev{respectively}~$S'$. 
When~$\Z$~is omitted, a life bisimulation must exist.
\end{definition}

A~\emph{standard~bisimulation}~\cite{blackburnetal:2001}, notation~$\bisim^\sta$, can be obtained from a life bisimulation by replacing the requirements~$a \in A_s$~(twice)~and $a \in A_{s'}$ in the above definition with~$a \in A$. 
Standard bisimilarity, therefore, implies life bisimilarity: 
if~$(\model,s)\bisim^\sta (\model',s')$, then~$(\model,s)\bisim (\model',s')$. 
On~the class of multiagent \rev{$\SFive$}~models, where all partial equivalence relations are equivalence relations, $(\model,s)\bisim^\sta (\model',s')$ if{f} $(\model,s)\bisim (\model',s')$. 
In~the states of \rev{$\SFive$}~models, all formulas are defined, and the semantics becomes two-valued. Other than that, comparing life and standard bisimulations is a bit like comparing apples to onions, as our semantics is three-valued. See also the example below.

\begin{example} \label{example.bisim2}
\rev{Partial epistemic models~$\model$, $\model'$,~and~$\model''$ below correspond to the simplicial models of Example~\ref{example.bisim} and are life bisimilar. 
The~states are named with the values of the local \rev{atoms} of the live agents. 
As~in Example~\ref{example.bisim}, a total bisimulation~$\Z$  between~$\model$~and~$\model'$  requires that~$X\Z X'$~and~$X\Z Z'$, while $\Z$~between~$\model$~and~$\model''$ must have~$Y\Z Y''$~and~$Y\Z Z''$. }
\begin{center}
\scalebox{0.8}{
\begin{tikzpicture}
\node[Kstate] (010) at (.5,0) {$0_a1_b$};
\node[Kstate] (001) at (3.5,0) {$0_a0_b1_c$};
\node (x) at (.5,-.6) {$X$};
\node (y) at (3.5,-.6) {$Y$};
\node (b) at (1.75,-1.5) {\color{white} $0_a1_b0_c$};
\draw[-] (010) -- node[above] {$a$} (001);
\end{tikzpicture}
\qquad
\begin{tikzpicture}
\node[Kstate] (010) at (.5,0) {$0_a1_b$};
\node[Kstate] (001) at (3.5,0) {$0_a0_b1_c$};
\node[Kstate] (b) at (1.9,-1.5) {$0_a1_b$};
\node (xa) at (0.2,-.6) {$X'$};
\node (ya) at (3.7,-.6) {$Y'$};
\node (za) at (2.8,-1.5) {$Z'$};
\draw[-] (010) -- node[above] {$a$} (001);
\draw (010) --(b)  node at (1,-.8) {$a$} ;
\draw[-] (b) -- (001)  node at (3,-.8) {$a$};
\end{tikzpicture}
\qquad
\begin{tikzpicture}
\node[Kstate] (010) at (.5,0) {$0_a1_b$};
\node[Kstate] (001) at (3.5,0) {$0_a0_b1_c$};
\node[Kstate] (b) at (1.9,-1.5) {$0_a0_b1_c$};
\node (xa) at (0.2,-.6) {$X''$};
\node (ya) at (3.8,-.6) {$Y''$};
\node (za) at (2.8,-1.5) {$Z''$};
\draw[-] (010) -- node[above] {$a$} (001);
\draw (010) --(b)  node at (1,-.8) {$a$} ;
\draw[-] (b) -- (001)  node at (3,-.8) {$a,b$};
\end{tikzpicture}
}
\end{center}
Let~$p_c$~be false in~$X$, false in~$X'$, true in~$Z'$,~and true in~$X''$. 
Then~no two of these models are standard bisimilar. 
$(i)$~Neither~$\model$~and~$\model'$, nor~$\model$~and~$\model''$ are standard bisimilar because $\model'$~and~$\model''$~contain a state~$s$ with~$L(s) = \{p_b,p_c\}$ whereas $\model$~does not contain such a state. 
$(ii)$~Whereas $\model'$~and~$\model''$~are not standard bisimilar because $p_c \in L(s)$ for all states in~$\model''$ but not in~$\model'$.
\end{example}

We~will now show that the~$\sigma$~and~$\kappa$~transformations preserve bisimilarity, and also the Hennessy--Milner property for life bisimulation between partial epistemic models. 
There~are actually two ways to go about this: a direct proof using life trees and embeddings in partial epistemic models and an indirect proof using the results already obtained for simplicial models. 
We~will do the latter as that proof is easy.

\begin{proposition}
\label{prop:bisimEquivMods}
\begin{compactenum}
\item\label{itemi} 
$\model \bisim \model'$ implies $\sigma(\model) \bisim \sigma(\model')$ for any partial epistemic models~$\model$~and~$\model'$. 
\item\label{itemii} 
$\C \bisim \C'$ implies $\kappa(\C) \bisim \kappa(\C')$ for any simplicial models~$\C$~and~$\C'$. 
\end{compactenum}
\end{proposition}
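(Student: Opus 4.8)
In each case I would produce the required (life) bisimulation explicitly — as the $\sigma$-image of the given life bisimulation in part~\ref{itemi}, and as the given simplicial bisimulation read directly as a relation between sets of facets in part~\ref{itemii} — and then verify the three clauses by unpacking the definitions of $\sigma$ and $\kappa$. The only real content is a short dictionary identifying $a$-adjacency of facets with the partial equivalence relation $\sim_a$; once that is in place, \textbf{Atoms}, \textbf{Forth}, and \textbf{Back} transfer almost verbatim. I do not expect a genuine obstacle — the statement is essentially bookkeeping — and the one point that deserves a sentence of care is exactly that dictionary in part~\ref{itemi}, together with the fact (resting on properness of~$\model$) that the facets of $\sigma(\model)$ are precisely the sets $\sigma(s)$.

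For item~\ref{itemi}, let $\Z$ witness $\model\bisim\model'$ and set $\B\ce\{(\sigma(s),\sigma(s'))\mid s\Z s'\}$, which is non-empty since $\Z$ is. I would first recall from the construction of $\sigma$ that the facets of $\sigma(\model)$ are exactly the $\sigma(s)$ for $s\in S$, that $\chi(\sigma(s))=A_s=L(s)\cap A$, and that $\ell(\sigma(s))=\bigsqcup_{a\in A_s}(P_a\cap L(s))$, and then establish the key equivalence
\[
a\in\chi\bigl(\sigma(s)\cap\sigma(t)\bigr)\quad\Longleftrightarrow\quad s\sim_a t,
\]
which holds because the $a$-colored vertex of $\sigma(s)$ is $([s]_a,a)$ when $a\in A_s$ and is absent otherwise, so $\sigma(s)$ and $\sigma(t)$ share an $a$-colored vertex iff $a\in A_s\cap A_t$ and $[s]_a=[t]_a$. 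Then for $\sigma(s)\B\sigma(s')$: \textbf{Atoms} of $\Z$ gives $A_s=L(s)\cap A=L'(s')\cap A=A_{s'}$, hence $\chi(\sigma(s))=\chi'(\sigma(s'))$, and $P_a\cap L(s)=P_a\cap L'(s')$ for each $a\in A_s$, hence $\ell(\sigma(s))=\ell'(\sigma(s'))$ — this is \textbf{Atoms}. For \textbf{Forth}: a facet $Y=\sigma(t)$ with $a\in\chi(\sigma(s)\cap Y)$ yields $s\sim_a t$, so $a\in A_s$; \textbf{Forth} of $\Z$ supplies $t'\sim'_a s'$ with $t\Z t'$, and then $Y'\ce\sigma(t')$ satisfies $a\in\chi'(\sigma(s')\cap Y')$ and $Y\B Y'$. \textbf{Back} is symmetric. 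Non-injectivity of $\sigma$ causes no trouble: $\sigma(s)=\sigma(s_0)$ forces $s\sim_a s_0$ for every $a\in A_s$ — which is also precisely what, via properness, makes $s\mapsto\sigma(s)$ surject onto the facets without identifying distinct points — so the displayed membership and the applicable instance of \textbf{Forth}/\textbf{Back} do not depend on the chosen representative.

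For item~\ref{itemii}, since the domains of $\kappa(\C)$ and $\kappa(\C')$ are $\FF(C)$ and $\FF(C')$, a simplicial bisimulation $\B$ between $\C$ and $\C'$ is already a relation between these domains, and I claim $\B$ itself is a life bisimulation $\kappa(\C)\bisim\kappa(\C')$. Here the dictionary is immediate from the definition of $\kappa$: $X\sim_a Y$ iff $a\in\chi(X\cap Y)$, $A_X=\chi(X)$, and $L(X)=\ell(X)\cup\chi(X)$ with $\ell(X)\subseteq P\setminus A$ and $\chi(X)\subseteq A$, so that $L(X)\cap A=\chi(X)$ and $L(X)\cap P_a=\ell(X)\cap P_a$. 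Consequently \textbf{Atoms} of $\B$ as a simplicial bisimulation ($\chi(X)=\chi'(X')$ and $\ell(X)=\ell'(X')$) is exactly \textbf{Atoms} as a life bisimulation, and the \textbf{Forth}/\textbf{Back} clauses coincide once $t\sim_a X$ is read as $a\in\chi(X\cap t)$ — the restriction $a\in A_X$ in the life-bisimulation clauses being only a harmless strengthening of the quantification over all $a\in A$ in the simplicial ones.

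Finally, I would note that this proposition, combined with Prop.~\ref{prop.corr2}, Thm.~\ref{BisimImpliesEquiv}, and Thm.~\ref{hehelastone} (and with $\kappa\circ\sigma$ isomorphic to the identity, as part of the categorical equivalence), is what delivers the Hennessy--Milner property for life bisimulation on image-finite partial epistemic models by the announced indirect route: the transformations carry bisimilarity both ways, modal equivalence is preserved by Prop.~\ref{prop.corr2}, and star-finiteness of $\sigma(\model)$ follows from image-finiteness of $\model$.
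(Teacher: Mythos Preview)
Your proof is correct, and the paper in fact omits the proof of this proposition entirely, stating at the outset of Sect.~\ref{sec.kripke} that ``we omit proofs in this section because they are obtained via the categorical equivalence with simplicial models.'' Your direct verification is the natural way to discharge this omission: the dictionary you set up --- facets of $\sigma(\model)$ being exactly the $\sigma(s)$ (using properness), $a$-adjacency corresponding to $\sim_a$, and the decomposition of $L$ into its $A$- and $P_a$-parts --- is precisely what makes the transfer of \textbf{Atoms}, \textbf{Forth}, and \textbf{Back} routine, and you handle the one subtle point (well-definedness under the representative choice in part~\ref{itemi}) correctly. One minor wording issue: in part~\ref{itemii} you call the restriction to $a\in A_X$ in the life-bisimulation clauses a ``harmless strengthening'' of the simplicial quantification over all $a\in A$; it is rather a harmless \emph{weakening}, since for $a\notin\chi(X)$ the simplicial \textbf{Forth}/\textbf{Back} hypotheses are vacuous anyway. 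This does not affect the argument.
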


\begin{corollary} \label{parbis_image_to_Model} 
$\model \bisim \kappa\bigl(\sigma(\model)\bigr)$~and $\C \bisim \sigma\bigl(\kappa(\C)\bigr)$.
\end{corollary}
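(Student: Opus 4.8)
The plan is to derive both claims purely from the categorical equivalence together with Proposition~\ref{prop:bisimEquivMods}, without any fresh bisimulation construction. The key external facts already available are: (a)~Proposition~\ref{prop:bisimEquivMods}, which says $\sigma$ and $\kappa$ preserve bisimilarity; (b)~the standard fact, underlying the categorical equivalence of~\cite{vDitKuz22arXiv}, that the round-trip compositions are naturally isomorphic to the identity, i.e.~$\kappa\bigl(\sigma(\model)\bigr)$ is isomorphic to~$\model$ as a partial epistemic model and $\sigma\bigl(\kappa(\C)\bigr)$ is isomorphic to~$\C$ as a simplicial model; and (c)~the elementary observation that an isomorphism is in particular a (total) life bisimulation, respectively a (total) bisimulation of simplicial models, since it trivially satisfies \textbf{atoms}, \textbf{forth}, and \textbf{back}.

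First I would recall that for any partial epistemic model $\model=(S,\sim,L)$, the map sending $s\in S$ to its facet $\sigma(s)$ in $\sigma(\model)$, and then that facet back to the state $\sigma(s)\in S$ of $\kappa\bigl(\sigma(\model)\bigr)$, is a bijection $\iota\colon S \to \FF\bigl(\sigma(\model)\bigr)$; one checks directly from the definitions of $\sigma$ and $\kappa$ that $\iota$ preserves the valuation (including global atoms, which record the set of live agents, matched by $A_s$ on one side and $\chi\bigl(\sigma(s)\bigr)$ on the other) and preserves each $\sim_a$ (since $s\sim_a t$ iff the vertices $([s]_a,a)$ and $([t]_a,a)$ coincide iff $a\in\chi\bigl(\sigma(s)\cap\sigma(t)\bigr)$). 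Hence $\Z=\{(s,\iota(s))\mid s\in S\}$ is a total life bisimulation, giving $\model\bisim\kappa\bigl(\sigma(\model)\bigr)$. Symmetrically, for a simplicial model $\C=(C,\chi,\ell)$, the map sending a facet $X\in\FF(C)$ to the vertex set $\{([X]_a,a)\mid a\in\chi(X)\}$ of $\sigma\bigl(\kappa(\C)\bigr)$ — where $[X]_a$ is the $\sim_a$-class of $X$ in $\kappa(\C)$, i.e.~the set of facets $a$-adjacent to $X$ — is an isomorphism of simplicial models (it is well-defined on vertices because two $a$-adjacent facets share their $a$-vertex, it is injective by the chromatic property, and it respects faces, $\chi$, and $\ell$), so it is in particular a bisimulation, yielding $\C\bisim\sigma\bigl(\kappa(\C)\bigr)$.

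I expect the only real work to be the bookkeeping in verifying that the canonical round-trip maps are honest isomorphisms — in particular making sure the vertex/facet identifications are well-defined and injective, and that global atoms are handled correctly — but this is exactly the content of the correspondence already established in~\cite{vDitKuz22arXiv} (extended to $\glocalK$ by Proposition~\ref{prop.corr2}), so it can be cited rather than reproved. The main conceptual point, and the place where one might slip, is that we need the round trips to be \emph{iso}, not merely modally equivalent or bisimilar by some other route; once that is granted, the corollary is immediate from the trivial fact that isomorphisms are life bisimulations (resp.\ simplicial bisimulations). Note that Proposition~\ref{prop:bisimEquivMods} itself is not strictly needed for this corollary — it is the isomorphism half of the equivalence that does the work — though one could alternatively phrase the argument as: $\model\bisim\model$ trivially via the identity relation, and then push this bisimilarity across using a combination of the transfer lemmas; the isomorphism route is cleaner.
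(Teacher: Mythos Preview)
Your approach is correct and matches the paper's intended route: the paper omits all proofs in Section~\ref{sec.kripke}, stating that they ``are obtained via the categorical equivalence with simplicial models,'' and your argument is precisely that --- the unit and counit of the equivalence furnish the required bisimulations directly.

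One small point worth making explicit, since you lean on the word \emph{isomorphism}: the round trip $\kappa\circ\sigma$ is not a strict isomorphism of partial epistemic models as raw structures. The valuation in $\kappa\bigl(\sigma(\model)\bigr)$ assigns to the state $\sigma(s)$ only the local atoms $p_a$ with $a\in A_s$, whereas $L(s)$ in $\model$ may also contain atoms $p_a$ for dead agents $a\notin A_s$ (the paper flags this ``superfluous information'' just after defining the Kripke semantics). So your phrase ``$\iota$ preserves the valuation'' is not literally true. However, the \textbf{Atoms} clause of life bisimulation only compares $L(s)\cap A$ and $L(s)\cap P_a$ for $a\in A_s$, so this discrepancy is invisible to~$\bisim$, and your explicit verification of the bisimulation conditions goes through unchanged. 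It is cleaner to say that $\iota$ is a bijection satisfying \textbf{Atoms}, \textbf{Forth}, and \textbf{Back} (equivalently, an isomorphism in the category of~\cite{vDitKuz22arXiv}, whose morphisms already quotient out this superfluous data), rather than an isomorphism of the raw triples $(S,\sim,L)$.

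Your closing observation that Proposition~\ref{prop:bisimEquivMods} is not actually needed for this corollary is correct; the isomorphism half of the equivalence does all the work.
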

Consequently, $\model \bisim \model'$ if{f} $\kappa\bigl(\sigma(\model)\bigr) \bisim \kappa\bigl(\sigma(\model')\bigr)$~and, by the same token, $\C \bisim \C'$ if{f} $\sigma\bigl(\kappa(\C)\bigr) \bisim \sigma\bigl(\kappa(\C')\bigr)$. 
Same~for \rev{the pointed versions}. 

\begin{theorem}[Hennessy--Milner property]
\label{Hennessy--Milner2}
For~any image-finite partial epistemic models\/~$(\model,s)$~and\/~$(\model',s')$
\[
(\model,s)\bisim(\model',s') \quad \Longleftrightarrow\quad (\model,s)\equiv^\loc(\model',s').
\] 
\end{theorem}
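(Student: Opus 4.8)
The plan is to derive the statement from the Hennessy--Milner property for simplicial models (Theorem~\ref{Hennessy--Milner}) by transporting everything along the operations~$\sigma$ and~$\kappa$, using formula-preservation (Proposition~\ref{prop.corr2}), bisimilarity-preservation (Proposition~\ref{prop:bisimEquivMods}), and the round-trip bisimilarities of Corollary~\ref{parbis_image_to_Model}. Three small preliminary observations will be needed. First, if~$\model$ is image-finite then~$\sigma(\model)$ is star-finite: every facet of~$\sigma(\model)$ is of the form~$\sigma(t)$ for some state~$t$, and such a facet is $a$-adjacent to~$\sigma(s)$ precisely when~$t\sim_a s$, so the finite set~$[s]_a$ maps onto the collection of facets $a$-adjacent to~$\sigma(s)$, making that collection finite. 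Second, life bisimilarity is an equivalence relation on (pointed) partial epistemic models: the inverse of a life bisimulation is a life bisimulation, and the relational composition of two life bisimulations is one again, both being routine checks of \textbf{atoms}/\textbf{forth}/\textbf{back}. Third, Proposition~\ref{prop:bisimEquivMods} and Corollary~\ref{parbis_image_to_Model} hold in pointed form, since all the constructions involved send the distinguished state to the distinguished facet and back; in particular~$\kappa\bigl(\sigma(\model,s)\bigr)=\bigl(\kappa(\sigma(\model)),\sigma(s)\bigr)$.

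The modal-equivalence half is then immediate. By Proposition~\ref{prop.corr2}, for every~$\varphi\in\glocalK$ we have~$\model,s\fctdef\varphi \Leftrightarrow \sigma(\model,s)\fctdef\varphi$ and~$\model,s\fctsat\varphi \Leftrightarrow \sigma(\model,s)\fctsat\varphi$, and likewise for~$(\model',s')$; combined with Lemma~\ref{simplifyModalEquivalence} this yields~$(\model,s)\equiv^\loc(\model',s')$ if and only if~$\sigma(\model,s)\equiv_{\glocalK}\sigma(\model',s')$. Since~$\model$ and~$\model'$ are image-finite, $\sigma(\model)$ and~$\sigma(\model')$ are star-finite, so Theorem~\ref{Hennessy--Milner} applies and gives~$\sigma(\model,s)\equiv_{\glocalK}\sigma(\model',s')$ if and only if~$\sigma(\model,s)\bisim\sigma(\model',s')$. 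It thus remains to show that~$(\model,s)\bisim(\model',s')$ if and only if~$\sigma(\model,s)\bisim\sigma(\model',s')$.

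The forward direction of this last equivalence is exactly the pointed form of Proposition~\ref{prop:bisimEquivMods}\eqref{itemi}. For the converse, assume~$\sigma(\model,s)\bisim\sigma(\model',s')$; by the pointed form of Proposition~\ref{prop:bisimEquivMods}\eqref{itemii} we obtain~$\kappa\bigl(\sigma(\model,s)\bigr)\bisim\kappa\bigl(\sigma(\model',s')\bigr)$, and composing this with~$(\model,s)\bisim\kappa\bigl(\sigma(\model,s)\bigr)$ and the inverse of~$(\model',s')\bisim\kappa\bigl(\sigma(\model',s')\bigr)$ from Corollary~\ref{parbis_image_to_Model}, using the symmetry and transitivity of life bisimilarity, yields~$(\model,s)\bisim(\model',s')$. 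This closes the chain of equivalences and proves the theorem. The argument is essentially bookkeeping, so there is no single deep obstacle; the points that need genuine (though routine) care are the three preliminary observations above, and in particular the fact that image-finiteness of~$\model$ and~$\model'$ is precisely the hypothesis under which Theorem~\ref{Hennessy--Milner} becomes applicable to~$\sigma(\model)$ and~$\sigma(\model')$.
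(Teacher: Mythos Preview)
Your proof is correct and follows precisely the indirect route the paper itself advocates: the paper explicitly states that the proof is obtained ``via the categorical equivalence with simplicial models'' using Proposition~\ref{prop.corr2}, Proposition~\ref{prop:bisimEquivMods}, and Corollary~\ref{parbis_image_to_Model}, but omits the details you have spelled out. Your three preliminary observations (image-finiteness implies star-finiteness of~$\sigma(\model)$, life bisimilarity is an equivalence, and the pointed versions of the transport results) are exactly the routine checks the paper leaves implicit.
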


\section{Conclusions and Future Work}
\label{conclusion}
\rev{In~this paper, we concentrate on the question of what is a natural notion of bisimulation for impure simplicial complexes. 
The~notion of bisimulation we have defined is indeed  natural for the following two reasons: 
(I)~On~the categorically equivalent structures, namely partial Kripke frames, this is the abstract notion of bisimulation coming from coalgebraic many-valued logic (cf.~\cite{BilkovaDostal2016}) because  partial Kripke frames can be represented by a well-behaved set endofunctor.
(II)~It~has the expected structural \textbf{forth}-and-\textbf{back} conditions, which are easy to check on finite structures, and it captures our intuition on process or behavioral equivalence in this case.

For~this structurally natural notion of bisimulation, we have shown that the local language is invariant but fails to be sufficiently expressive (the Hennessy--Milner property fails). 
Similar situation has been encountered, e.g.,~in~\cite{{Marti2014-MARAHP-6}} or in~\cite{BilkovaDostal2016}, for example, in case of most G\"odel modal logics. 
We~have demonstrated how to enhance the language with global atoms to ensure sufficient expressivity and proved the Hennessy--Milner property for the extended language. 
Unlike the two-valued case, the latter turned out to be quite non-trivial. 
The~reason for that is the lack of symmetry among our three values. 
If~two boolean-valued structures are not modally equivalent, the distinguishing formula is either true or false in the first structure, but, if the other distinguishing truth value is desired, it is sufficient to take the negation. 
In~our three-valued logic, there is an additional case of the distinguishing formula being undefined in the first structure. 
The~transformation of it into a defined (true) formula is so non-trivial that it cannot even be done in a structure-independent way.
\smallskip

\noindent
\textbf{Future work} 
For~the local language, one can ask for a matching notion of model equivalence, for example, in the spirit of logic-induced bisimulations of~\cite{GrootHansenKurz2020}. 
For~the fully abstract coalgebraic treatment of the logics in this paper as three-valued coalgebraic logics, as well as for investigating which three-valued  first-order fragment  corresponds to modal formulas in a sense of Van~Benthem Theorem, essential insights need to be developed first. 
While~the propositional logic~PWK does not behave well in terms of abstract algebraic logic and only relatively recently was investigated in this context~\cite{PaoliP21,BonzioGPP17}, the first-order expansion of~PWK matching our interpretation of modalities has not been explored. 
The~three-element weak Kleene algebra underlying the semantics is not a lattice, which impedes the use of the usual methods,  e.g.,~those employed in~\cite{BilkovaDostal2016}.

Other~directions for further research were proposed by the anonymous reviewers. 
Among~them was the question of the complexity of checking whether two structures are bisimilar. 
We~thank them for their comments and suggestions.}

\providecommand{\ditmarsch}[1]{}\providecommand{\hoek}[1]{}

\end{document}